\newcommand{\delete}[1]{}
\begin{document}
\title{A Distributed Simplex Architecture for Multi-Agent Systems}
\author{Usama Mehmood\inst{1} \and
Scott D. Stoller\inst{1} \and
Radu Grosu\inst{2} \and\\
Shouvik Roy\inst{1} \and
Amol Damare\inst{1} \and
Scott A. Smolka\inst{1}}
\authorrunning{U. Mehmood et al.}
\institute{Department of Computer Science, Stony Brook University, USA \and
Department of Computer Engineering, Technische Universit\"at Wien, Austria }

\maketitle             

\begin{abstract}
We present \emph{Distributed Simplex Architecture}~(DSA), a new runtime assurance technique that provides safety guarantees for multi-agent systems (MASs). DSA is inspired by the Simplex control architecture of Sha et al., but with some significant differences. The traditional Simplex approach is limited to single-agent systems or a MAS with a centralized control scheme. DSA addresses this limitation by extending the scope of Simplex to include MASs under distributed control. 
In DSA, each agent has a local instance of traditional Simplex such that the preservation of safety in the local instances implies safety for the entire MAS. 
We provide a proof of safety for DSA, and present experimental results for several case studies, including flocking with collision avoidance, safe navigation of ground rovers through way-points, and the safe operation of a microgrid.
 
\keywords{Runtime assurance \and Simplex architecture \and Control Barrier Functions \and Distributed flocking \and Reverse switching.}
\end{abstract}

\section{Introduction}      
\label{sec:intro}
% Add the simplex figure.
%Multi-agent systems + safety
A multi-agent system (MAS) is a group of autonomous, intelligent agents that work together to solve tasks and carry out missions.
%In recent years, MASs are getting increasingly popular in various disciplines, including computer science and engineering, as a robust and scalable solution to address complex problems. % In the context of MASs, robustness is the ability to operate as expected, even after the loss of individuals and scalability is the capacity to include more agents in the group without compromising the performance. 
MAS applications include the design of power systems and smart-grids \cite{Nasir2019_microgrids, Boussaada2016_smartgrids}, autonomous control of robotic swarms for monitoring, disaster management, military battle systems, etc. \cite{tahir2019}, and sensor networks. Many MAS applications are safety-critical.
% with strict safety constraints on the states. 
It is therefore paramount that MAS control strategies ensure safety. 

In this paper, we present the \emph{Distributed Simplex Architecture} (DSA), a new runtime assurance technique that provides safety guarantees for MASs under distributed control. 
DSA is inspired by Sha et al.'s Simplex Architecture \cite{seto1999, sha2001}, but differs from it in significant aspects. 
The Simplex Architecture provides runtime assurance of safety by switching control from an unverified (hence potentially unsafe) \emph{advanced controller} (AC) to a verified-safe \emph{baseline controller} (BC), if the action produced by the AC could result in a safety violation in the near future. 
The switching logic is implemented in a verified \emph{decision module} (DM).
% i.e. the control is switched to BC when the system is in a state from which the BC is guaranteed to preserve safety.  
% Limitations of the traditional Simplex Architecture
The applicability of the traditional Simplex Architecture is limited to systems with a centralized control architecture. 
% !!!!! NOT SURE ABOUT NEXT SENTENCE SINCE WE ALREADY ADDRESS THIS ISSUE IN NSA !!!!!
% Moreover, 
% In the traditional Simplex Architecture, 
% once the DM switches control from AC to BC there is no mechanism in place to revert the control back to the AC. 
% Recently, there has been some work on the development of techniques for reverse switching (from BC to AC) \cite{dung2019_nsa, desai2019_rsl, johnson2016_rsl, vivekanandan2016_rsl}.

DSA, illustrated in Fig. \ref{fig:DSA_arch}, addresses this limitation by making necessary additions to the traditional Simplex to widen its scope to include MASs. Also, as in \cite{dung2019_nsa}, it implements \emph{reverse switching}  by reverting control back to the AC when it is safe to do so.
% \revision{if the system is sufficiently safe.} REVISE

% DSA Architecture Diagram
\begin{figure}[t]
    \centering
    \includegraphics[width=12cm]{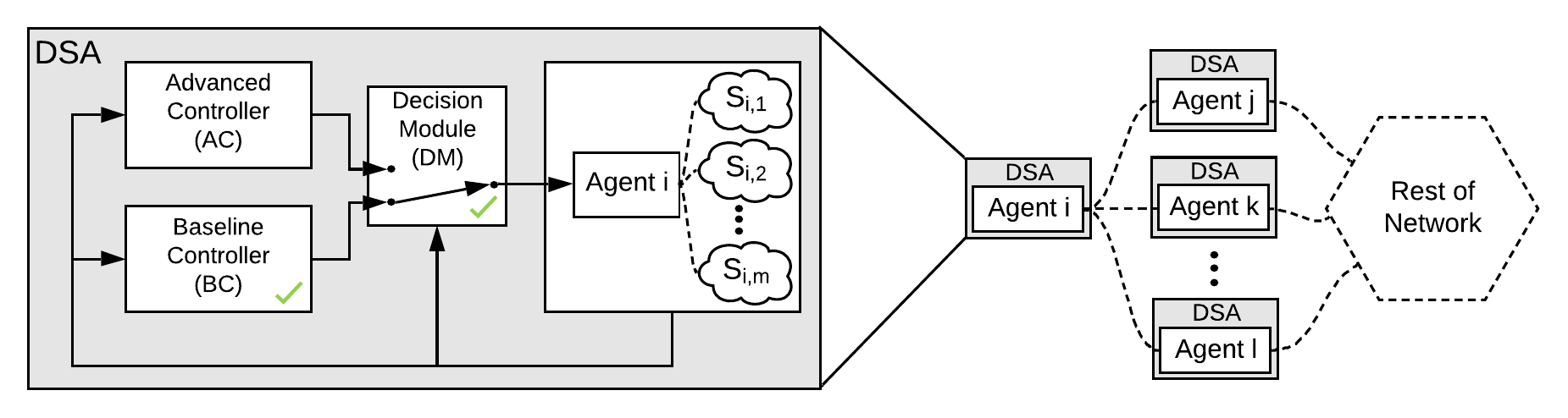}
    \caption{The DSA for the MAS on the right. All agents in the MAS are homogeneous and operate under DSA, but the figure shows the DSA components for only agent $i$. The sensed state of agent $i$'s $j^{th}$ neighbor is denoted as $S_{i,j}$. The AC,  BC, and  DM take as input the state of the agent and its neighbors.}
    \label{fig:DSA_arch}
    \vspace*{-4ex}
\end{figure}

In DSA, for each agent, there is a verified-safe BC and a certified switching logic such that if all the agents operate under DSA, then safety of the MAS is guaranteed.
The BC and DM along with the AC are distributed and depend only on local information. DSA itself is \emph{distributed} in the sense that it involves one local instance of traditional Simplex per agent such that the conjunction of their respective safety properties yields the desired safety property for the entire MAS.  For example, consider our flocking case study, where we want to establish collision-freedom for the entire MAS.  This can be accomplished in a distributed manner by showing that each local instance of Simplex, say for agent $i$, ensures collision-freedom for agent $i$ and its neighboring agents. 

% In terms of our case studies, two of them are symmetric and homogeneous (flocking and waypoint seeking), whereas the third one (AC microgrid) is neither.

DSA allows agents to switch their mode of operation independently. At any given time, some agents may be operating in AC mode while others are operating in BC mode. 
Our approach to the design of the BC and DM leverages \emph{Control Barrier Functions} (CBFs), which have been used to synthesize safe controllers \cite{Gurriet2018, egerstedt2018, magnus_heterogenous2016}, and are closely related to Barrier Certificates used for safety verification of closed dynamical systems \cite{prajna2004, prajna2006}.
A CBF is a mapping from the state space to a real number, with its zero level-set partitioning the state space into safe and unsafe regions.  If certain inequalities on the Lie derivative of the CBF are satisfied, then the corresponding control actions are considered safe (admissible).
%It is these properties that render CBFs useful as a tool to design certified BC and DM.

% How are CBFs used in BC and DM
In DSA, the BC is designed as an optimal controller with the goal of increasing a utility function based on the Lie derivatives of the CBFs. As CBFs are a measure of the safety of a state, optimizing for control actions with a higher Lie derivative values gives a direct way to make the state safer. The safety of the BC is further guaranteed by constraining the control action to remain in a set of admissible actions that satisfy certain inequalities on the Lie derivatives of the CBFs.
CBFs are also used in the design of the switching logic, as they provide an efficient method for 
checking whether an action could lead to a safety violation during the next time step.
% old wording: ... for reachability analysis without the need for explicit computation of reachable sets. 
% The CBF based reachability checks are key elements of the DM. 

% We demonstrate the effectiveness of DSA for the flocking problem with inter-agent collision avoidance as the safety constraint. The synthesis of safe controllers for flocking systems has been studied extensively \cite{umehmood2020, olfati2006, magnus_heterogenous2016}. To the best of our knowledge, safe-flocking is still an open problem, as the existing algorithms assume unbounded accelerations for agents \cite{olfati2006}, an assumption which is unrealistic, or brake the agents to a stop \cite{magnus_heterogenous2016}, a manoeuvre that undermines the primary objective. 
We demonstrate the effectiveness of DSA on several example MASs, including a flock of robots moving coherently while avoiding inter-agent collisions, ground rovers safely navigating through a series of way-points, and safe operation of a microgrid.
%SDS removed to save space.
%The rest of the paper is organized as follows. Section~\ref{sec:background} provides background on the Simplex Architecture and Control Barrier Functions. Section \ref{sec:DSA}~presents our Distributed Simplex Architecture.  Sections~\ref{sec:flocking_case_study}-\ref{sec:mg_case_study} contains our experimental results.  Section~\ref{sec:related_work} discusses related work, and Section~\ref{sec:conclusion} offers our concluding remarks and directions for future work.

\section{Background} 
\label{sec:background}
\subsection{Simplex Architecture}
% As illustrated in Figure~\ref{fig:dsa}, 
The Simplex Control Architecture relies on a verified-safe baseline controller (BC) in conjunction with the verified switching logic of the Decision Module (DM) to guarantee the safety of the plant (Agent $i$ in the Fig.~\ref{fig:DSA_arch}) while permitting the use of an unverifiable, high-performance advanced controller (AC).
%by using it in conjunction with a verified-safe controller (BC) and verified switching logic. 
%Hence, Simplex Architecture is agnostic to the choice of the AC and it as a black-box.

%admissible/safe states, recoverable states (defined in conjuction with the ability of BC to keep the state in the safe set) from paper :Real-Time Reachability for Verified Simplex Design, briefly discuss inertia of the system. discuss BC, its properties and safety guarantees.
Let the \emph{admissible} states be those which satisfy all safety constraints and operational limits. Other states are are called \emph{inadmissible}. The goal of the Simplex Architecture is to ensure the system never enters an inadmissible state. The set $\mathcal{R}$ of \emph{recoverable} states is a subset of the admissible states such that the BC, starting from any state in $\mathcal{R}$ guarantees that all future states are also in $\mathcal{R}$. The recoverable set takes into account the inertia of the physical system, giving the BC enough time to preserve safety.
% The need for a notion of a recoverable set is needed as there exist some admissible states on the boundary of the admissible set such that starting from these states, no control scheme can prevent the trajectories from violating safety.

% The certified DM ensures that the control is switched to the BC if the action produced by the AC will make the next state unrecoverable. 

The DM's \emph{forward switching condition} (FSC) evaluates the control action proposed by the AC and decides whether to switch to the BC. 
% We refer to it as the FSC to differentiate it from reverse switching.
% SDS: the next sentence doesn't make sense.  a state doesn't have a gradient.  a function has a gradient.  I revised it.
%A common technique to determine the FSC is to shrink the recoverable region by an amount equal to the gradient of the state in the direction of the control input and switch if the next state is outside the smaller set.
A common technique to develop a FSC is to shrink the recoverable region by a margin based on the maximum time derivative of the state and the length of a timestep, and switch to BC if the current state lies outside this smaller set.
%Another method is to compute the set of states reachable from the current state in one time-step and check if the unrecoverable region is part of the reachable set.

\subsection{Control Barrier Functions}
Control Barrier Functions (CBFs)~\cite{wieland2007,Ames2019_reviewCBF} are an extension of the Barrier Certificates used for safety verification of hybrid systems \cite{prajna2004, prajna2006}.  CBFs are a class of Lyapunov-like functions used to guarantee safety for nonlinear control systems by assisting in the design of a class of safe controllers that establish the forward-invariance of safe sets~\cite{magnus2015, magnus_heterogenous2016}.  Our presentation of CBFs is based on \cite{Ames2019_reviewCBF}.
% Definition, class of systems (inline), function from R^n to R, super level set 0 is the safe set. 

% Define the class of systems. nonlinear affine control systems
Consider a nonlinear affine control system
\begin{equation}
\label{eq:control_affine}
    \dot{x} = f(x) + g(x)u,
\end{equation}
with state $x \in D \subset \mathbb{R}^n$, control input $u \in U$, and functions $f$ and $g$ that are locally Lipschitz. The set $\mathcal{R}$ of recoverable states is defined as the super-level set of a continuously differentiable function $h:D \subset \mathbb{R}^n \to \mathbb{R}$. The recoverable set $\mathcal{R}$ and its boundary $\delta \mathcal{R}$ are given by:
% Define the recoverable set using the CBF.
\begin{align}
\label{eq:recoverable_set}
    \mathcal{R} &= \{x \in D \subset \mathbb{R}^n | h(x) \geq 0 \} \\
    \mathcal{\delta R} &= \{x \in D \subset \mathbb{R}^n | h(x) = 0 \} 
\end{align}
% Define the condition for a function to be CBF
% For the function $h(x)$ to be considered a CBF, it must satisfy the following condition.
%Firstly, the gradient of h, $\frac{\delta h}{\delta x} \neq 0$ for all $x \in \delta \mathcal{R}$. 
For all $x \in \mathcal{D}$, if there exists an extended class $\mathcal K$ function $\alpha: \mathbb{R} \to \mathbb{R}$ (strictly increasing and $\alpha(0) = 0$) such that the following condition on the Lie-derivative of $h$ is satisfied:
\begin{equation}
\label{eq:cbf_define}
     \underset{u \in U}{sup}[L_fh(x) + L_gh(x)u + \alpha(h(x) \geq 0
\end{equation}
then the function $h(x)$ is a valid CBF.
Condition~(\ref{eq:cbf_define}) implies the existence of a control action for all $x \in D$, such that the Lie-derivative of $h$ is bounded from below by $-\alpha(h(x))$. Furthermore, for $x \in \delta \mathcal{R}$, condition~(\ref{eq:cbf_define}) reduces to a result for set invariance known as Nagumo's theorem \cite{nagumos_theorom_1, nagumos_theorom_2}. Condition~(\ref{eq:cbf_define}) is used to define the set $ K(x)$ of control actions that establish the forward invariance of set $\mathcal{R}$; i.e., starting from $x \in \mathcal{R}$, the state will always remain inside $\mathcal{R}$:
% Set of safe controllers defined via CBF
\begin{equation}
\label{eq:set_safe_acc_background}
    K(x) = \{u \in U : L_fh(x) + L_gh(x)u + \alpha(h(x)]\geq 0\}
\end{equation}
% safety theorem & proof
\begin{theorem}
\label{thm:backgrounf_cbf}
\cite{Ames2019_reviewCBF} For the control system given in Eq.~(\ref{eq:control_affine}) and recoverable set $\mathcal{R}$ defined in (\ref{eq:recoverable_set}) as the super-level set of some continuously differentiable function $h:\mathbb{R}^n \to \mathbb{R}$, if $h$ is a control barrier function for all $x \in D$ and $\frac{\delta h}{\delta x} \neq 0$ for all $x \in \delta \mathcal{R}$, then any controller $u$ such that $\forall x \in D: u(x) \in  K(x)$ 
%SDS I changed the following phrase.  I see that Ames+ use the original phrase, but we haven't defined what it means for a set of states to be safe, other than the default interp that each state in it is safe, which is weaker than what we want here.
% renders the set $\mathcal{R}$ safe.
ensures forward-invariance of $\mathcal{R}$.
\end{theorem}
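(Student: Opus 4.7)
The plan is to show forward invariance of $\mathcal{R}$ by analyzing the evolution of $h(x(t))$ along the closed-loop trajectory and comparing it with a scalar ODE for which nonnegativity of solutions is easy to establish. Fix any initial condition $x(0) = x_0 \in \mathcal{R}$, so $h(x_0) \geq 0$, and let $x(t)$ be the trajectory produced by the controller $u(x) \in K(x)$. Because $u$ satisfies the CBF inequality at every $x \in D$, we immediately get
\begin{equation*}
\dot{h}(x(t)) \;=\; L_f h(x(t)) + L_g h(x(t))\,u(x(t)) \;\geq\; -\alpha(h(x(t))).
\end{equation*}
So the scalar quantity $\eta(t) := h(x(t))$ satisfies the differential inequality $\dot{\eta}(t) \geq -\alpha(\eta(t))$ with $\eta(0) \geq 0$.

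Next I would invoke a Comparison Lemma argument. Consider the companion ODE $\dot{y} = -\alpha(y)$ with $y(0) = \eta(0) \geq 0$. Since $\alpha$ is an extended class $\mathcal{K}$ function with $\alpha(0) = 0$, the constant $y \equiv 0$ is an equilibrium solution, and by uniqueness (or monotonicity since $\alpha$ is strictly increasing) any solution starting at $y(0) \geq 0$ stays nonnegative for all $t \geq 0$. The Comparison Lemma then yields $\eta(t) \geq y(t) \geq 0$, and hence $h(x(t)) \geq 0$, i.e.\ $x(t) \in \mathcal{R}$ for all $t \geq 0$. This is the forward invariance claim.

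The role of the hypothesis $\tfrac{\partial h}{\partial x} \neq 0$ on $\delta \mathcal{R}$ would appear in the boundary case: if $x(t^*) \in \delta \mathcal{R}$ at some time $t^*$, the CBF inequality collapses to $L_f h + L_g h\,u \geq 0$, which is exactly the Nagumo/subtangentiality condition ensuring the vector field points into $\mathcal{R}$ or along $\delta\mathcal{R}$. The nonvanishing gradient guarantees $\delta \mathcal{R}$ is a regular codimension-one manifold so that the Lie derivative is a faithful measure of the outward component of $\dot{x}$; this is what underwrites the application of Nagumo's theorem cited after Eq.~(\ref{eq:cbf_define}).

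I expect the main technical obstacle to be the rigorous justification of the Comparison Lemma step, in particular ensuring that $\eta(t)$ is absolutely continuous (so that the differential inequality holds almost everywhere) and that the comparison holds globally in $t$ rather than only locally. Local Lipschitz continuity of $f$ and $g$, plus continuity of the feedback $u(x)$ restricted to the invariant set, give existence of Carath\'eodory solutions and legitimize the comparison argument; extending to all $t \geq 0$ then follows because trajectories cannot escape $\mathcal{R}$ in finite time under the inequality. All other steps are direct manipulations of the definitions of $K(x)$ and $\mathcal{R}$.
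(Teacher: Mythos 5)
Your argument is correct in substance but takes a genuinely different route from the paper's. The paper's proof is a two-line appeal to Nagumo's theorem: since $\alpha(0)=0$, condition~(\ref{eq:cbf_define}) reduces on $\delta \mathcal{R}$ to $\dot{h} \geq 0$, which is exactly the subtangentiality condition, and the hypothesis $\frac{\delta h}{\delta x} \neq 0$ on $\delta \mathcal{R}$ supplies the regularity needed for Nagumo's theorem to apply to the super-level set. You instead run a comparison-lemma argument on the scalar quantity $h(x(t))$: the differential inequality $\dot{h} \geq -\alpha(h)$ together with nonnegativity of solutions of $\dot{y} = -\alpha(y)$, $y(0) \geq 0$, gives $h(x(t)) \geq y(t) \geq 0$. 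Both are standard proofs of this result (yours is essentially the one in the cited Ames et al.\ reference). The trade-off is as follows. The Nagumo route only requires examining the vector field on the boundary and is shorter, but it genuinely needs the nonvanishing-gradient hypothesis; your route never uses that hypothesis (you correctly observe it would only matter if one fell back on the boundary/Nagumo argument), and it yields the stronger quantitative bound $h(x(t)) \geq y(t)$, from which asymptotic stability of $\mathcal{R}$ also follows. The price is the extra regularity you yourself flag: the comparison step wants $\alpha$ locally Lipschitz (the paper only assumes it is extended class $\mathcal{K}$, i.e.\ strictly increasing with $\alpha(0)=0$), and $t \mapsto h(x(t))$ must be absolutely continuous, which requires the closed-loop vector field, hence the feedback $u(x)$, to be regular enough for solutions to exist; these gaps are routine to close (e.g.\ by comparing against the minimal solution of $\dot{y}=-\alpha(y)$, which stays nonnegative because $-\alpha(y)>0$ for $y<0$), but they are assumptions beyond what the theorem statement literally provides, whereas the paper's Nagumo argument sidesteps them entirely. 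One cosmetic note: your $\eta(t)$ collides with the paper's use of $\eta$ for the control period, so rename it if this is to be read alongside the rest of the text.
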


\begin{proof}
Condition~(\ref{eq:cbf_define}) on the Lie-derivative of $h$ reduces, on the boundary of $\mathcal{R}$, to the set invariance condition of Nagumo's theorem: for ${x \in \delta \mathcal{R}}$, ${\dot{h} \geq - \alpha (h(x)) = 0}$. Hence, according to Nagumo's theorem \cite{nagumos_theorom_1, nagumos_theorom_2} the set $\mathcal{R}$ is forward-invariant.
\end{proof}

\section{Distributed Simplex Architecture}
\label{sec:DSA}
This section describes the Distributed Simplex Architecture (DSA). We formally introduce the MAS safety problem and then discuss the main components of DSA, namely, the distributed baseline controller (BC) and the distributed decision module (DM). 
% We introduce DSA as a runtime-assurance technique that guarantees safety for MASs by designing, for each agent, a simple and verified-safe secondary controller (BC) which takes over the control, if the system is in danger of violating safety. 
%In addition to knowing their owns state, It is assumed that agents have access the state of a subset of the group called neighbors. are identical in their sensing and communication capabilities.  
% The plant model equations. Assumptions. Notion of neighbors. 

We say that an instance of DSA is \emph{symmetric} if every agent uses the same switching condition and baseline controller. Moreover, DSA, or more precisely the MAS it is controlling, is \emph{homogeneous} if every constituent agent is an instance of the same plant model.

Consider a MAS consisting of $k$ homogeneous agents, denoted as $\mathcal{M} = \{1,...,k\}$, where the nonlinear control affine dynamics for the $i^{th}$ agent are:
\begin{equation}
\label{eq:control_affine_MAS}
    \dot{x}_i = f(x_i) + g(x)u_i,
\end{equation}
Here, $x_i \in D \in \mathbb{R}^n$ is the state of agent $i$ and $u_i \in U \subset \mathbb{R}^m$ is its control input.  
% We assume the MAS is homogeneous, so functions $f$ and $g$ are same for every agent.
For an agent $i$, we define the set of its \emph{neighbors} $\mathcal{N}_i \subseteq \mathcal{M}$ as the agents whose state is accessible to the agent $i$ either through sensing or communication.
Depending on the application, the set of neighbors could be fixed or vary dynamically. 
For example, in our flocking case study (Section~\ref{sec:flocking_case_study}), agent $i$'s neighbors (in a given state) are the agents within a fixed distance $r$ of agent $i$; we assume agent $i$ can accurately sense the positions and velocities those agents.
%For example, in the case of flocking, an agent's neighborhood varies dynamically. 
 We denote the combined state of all the agents in the MAS as the vector ${\mathbf{x} = \{x_1^T, x_2^T, ...x_k^T\}^T}$ and denote the state of the neighbors of agent $i$ (including agent $i$) as $x_{\mathcal{N}_i}$.
DSA uses discrete-time control: the DM and controllers are evaluated every $\eta$ seconds.  
We assume % perfect synchronization, i.e., 
that all agents evaluate DM and controllers simultaneously; this assumption simplifies the analysis but can be relaxed.
% Furthermore, It is assumed that the agents can observe the state of the agents in a subset of the entire group either through direct sensing or communication. 
% The DSA architecture for each agent. 
% Next, we discuss the components of DSA which include, the decentralized baseline controller (BC) and the decision module (DM), which also includes the reverse switching logic. 

\vspace{-2.0ex}
\subsubsection{Admissible States}
% Formulation of constraints. 
Set of admissible states $\mathcal{A} \subset \mathbb{R}^{kn}$ consists of all states that satisfy the safety constraints. A constraint $C$ is a  function from $k$-agent MAS states to the reals; ${C:D^k \to \mathbb{R}}$. In this paper, we are primarily concerned with {binary constraints} (between neighboring agents) $C_{ij}: D \times D \to \mathbb{R}$, and \emph{unary constraints} $C_i:D \to \mathbb{R}$. Hence, set of admissible states, $\mathcal{A} \subset \mathbb{R}^{kn}$ are the states of MAS $\mathbf{x} \in \mathbb{R}^{kn}$, such that all the unary and binary constraints are satisfied.  
% \begin{equation}
% \label{eq:admissible_set}
%     \mathcal{A} = \{\mathbf{x} \in \mathbb{R}^{kn} \mid 
%     C_{ij}(x_i, x_j) \leq 0 \ \forall i \in \mathcal{M}, j \in \mathcal{N}_i,
%     C_{i}(x_i) \leq 0 \ \forall i \in \mathcal{M} \}
% \end{equation}
% Problem Definition
% \paragraph{Problem Definition}

Formally, a symmetric instance of DSA aims to solve the following problem.  Given a MAS defined as in Eq.~(\ref{eq:control_affine}) and $\mathbf{x}(0) \in \mathcal{A}$, design a BC and DM to be used by all agents such that the MAS remains safe; i.e. $\mathbf{x}(t) \in \mathcal{A}$, $\forall \ t > 0$. 

% The DSA implements the BC and DM for all agents. As in traditional simplex architecture, the DSA treats the AC as a black-box.
% Discrete time control, control period
% Recoverable set using as the superlevel set of CBF and problem definition. cite methods on the synthesis of CBFs from constraints. 
% Consider the set of admissible states for the MAS, denoted by $\mathcal{A} \subset  \mathbb{R}^{kn}$, as the states which satisfy all the safety constraints and operational limits. 
% \begin{equation}
% \label{eq:admissible_set}
%     \mathcal{A} = \{\mathbf{x} \in \mathbb{R}^{kn}\ \mid c_j(\mathbf{x}) \leq 0 \ \forall j \in \{1,..,q\}\}
% \end{equation}
% Where $c_j(\mathbf{x}):\mathbb{R}^{kn} \to \mathbb{R}$ are non-linear functions. The set of recoverable states $\mathcal{S} \subset \mathcal{A}$ is defined as the super-level set of a CBF~h, as in Eq~(\ref{eq:recoverable_set}). The CBFs can be computed using the constraints in Eq~(\ref{eq:admissible_set}) using the sum-of-squares programs \cite{cbf_sos} or using the technique in \cite{cbf_synthesis}. An application of these techniques for the synthesis of CBFs for several systems can be found in \cite{Ames2019_reviewCBF}. 
% We recall that according to Theorem~\ref{thm:backgrounf_cbf}, existence of the CBF implies the existense of set of the safe control actions K(x) given in Eq~(\ref{eq:set_safe_acc_background}). As we show later, the set safe actions $K(x)$ Can be used for the synthesize of a BC which is provably safe.
% Recoverable state

\vspace{-2.0ex}
\subsubsection{Recoverable States}
For each agent $i$, the local admissible set $\mathcal{A}_i \subset \mathbb{R}^n$ is the set of states $x_i \in \mathbb{R}^n$ which satisfy all the unary constraints. The set $\mathcal{S}_i \subset \mathcal{A}_i$ is defined as the super-level set of the CBF $h_i:\mathbb{R}^n \to \mathbb{R}$, which is designed to ensure forward-invariance of $\mathcal{A}_i$.
%which is synthesized using the unary constraints. 
Similarly, for a pair of neighboring agents $i,j$ where $i \in \mathcal{M}, j \in \mathcal{N}_i$, the pairwise admissible set $\mathcal{A}_{ij} \subset \mathbb{R}^{2n}$ is the set of pairs of states which satisfy all the binary constraints. The set $\mathcal{S}_{ij} \subset \mathcal{A}_{ij}$ is defined as the super-level set of the CBF $h_{ij}:\mathbb{R}^{2n} \to \mathbb{R}$ designed to ensure forward-invariance of $\mathcal{A}_{ij}$.
The recoverable set $\mathcal{R}_{ij} \subset \mathbb{R}^{2n}$, for a pair of neighboring agents $i, j$ where ${i \in \mathcal{M}, j \in \mathcal{N}_i}$, is defined in terms of $\mathcal{S}_{i}$, $\mathcal{S}_{j}$ and $\mathcal{S}_{ij}$.
% For each constraint in (\ref{eq:admissible_set}), we define the \emph{partial-recoverable} set as the super-level set of a CBF, as in Eq~(\ref{eq:recoverable_set}). The partial-recoverable set corresponding to the unary constraints and the binary constraints are denoted as $S_{i}$ and $S_{ij}$, respectively. 
\begin{align}
\label{eq:recoverable_sets}
    \mathcal{S}_i &= \{x_i \in \mathbb{R}^n | h_i(x_i) \geq 0 \} \\
    \mathcal{S}_{ij} &= \{(x_i, x_j) \in \mathbb{R}^{2n} | h_{ij}(x_i, x_j) \geq 0 \}  \\
    \mathcal{R}_{ij} &= (\mathcal{S}_i \times \mathcal{S}_j) \cap \mathcal{S}_{ij} 
\end{align}
%The recoverable set $\mathcal{R} \subset \mathcal{A}$ for the entire MAS is defined as the intersection of the pairwise recoverable sets. 
The recoverable set $\mathcal{R} \subset \mathcal{A}$ for the entire MAS is defined as the set of system states in which $(x_i,x_j) \in \mathcal{R}_{ij}$ for every pair of neighboring agents $i,j$.
The CBFs can be computed 
% using the constraints in (\ref{eq:admissible_set}) 
using sum-of-squares programming \cite{cbf_sos} or the technique in \cite{cbf_synthesis}. An application of these techniques for the synthesis of CBFs for several systems can be found in \cite{Ames2019_reviewCBF}. 
Note that if agent $i$ and $j$'s controllers satisfy the following constraints based on the Lie derivatives of $h_i, h_j$ and $h_{ij}$, similar to the constraints in (\ref{eq:set_safe_acc_background}), the pairwise state of agents $i$ and $j$ will remain in $\mathcal{R}_{ij}$ according to Theorem \ref{thm:backgrounf_cbf}. \begin{subequations}
\label{eq:lie_constraints_dsa}
    \begin{equation}
        L_fh_i(x_i) + L_gh_i(x_i) u_i + \alpha(h_i(x_i)) \geq 0 
    \end{equation}
    \begin{equation}
        L_fh_j(x_j) + L_gh_j(x_j) u_j + \alpha(h_j(x_j)) \geq 0 \\
    \end{equation}
    \begin{equation}
    \label{eq:lie_constraints_dsa_c}
        L_fh_{ij}(x_i, x_j) + L_gh_{ij}(x_i, x_j) \begin{bmatrix}u_i\\u_j\end{bmatrix} + \alpha(h_{ij}(x_i, x_j)) \geq 0 
    \end{equation}
\end{subequations}
% \begin{gather}
% \label{eq:lie_constraints_dsa}
%     L_fh_i(x_i) + L_gh_i(x_i) u_i + \alpha(h_i(x_i)) \geq 0 \\
%     L_fh_j(x_j) + L_gh_j(x_j) u_j + \alpha(h_j(x_j)) \geq 0 \\
%     L_fh_{ij}(x_i, x_j) + L_gh_{ij}(x_i, x_j) \begin{bmatrix}u_i\\u_j\end{bmatrix} + \alpha(h_{ij}(x_i, x_j)) \geq 0 
% \end{gather}
% Note that the last term of the constraint on the lie-derivative of the pairwise CBF $h_{ij}$ is split equally between both the agents, indicating that the homogeneous agents share an equal responsibility to keep the pairwise state safe. 

\vspace{-2.0ex}
\subsubsection{Constraint Partitioning}
Note that the constraints in (\ref{eq:lie_constraints_dsa}) are linear in the control variable. For ease of notation we write the unary constraints as ${A_i u_i \leq b_i}$ and the binary constraints as ${\left[\begin{smallmatrix}P_{ij}&Q_{ij}\end{smallmatrix} \right] \left[\begin{smallmatrix}u_i\\u_j\end{smallmatrix} \right] \leq b_{ij}}$.

% Global controller, Problem with distributed control and solution
The binary constraint in (\ref{eq:lie_constraints_dsa_c}) is a condition on the control action of a pair of agents. For a centralized MAS, the global controller can pick coordinated actions for agents $i$ and $j$ to ensure the binary constraint (\ref{eq:lie_constraints_dsa_c}) is satisfied.  However, for a decentralized MAS, the distributed control of the two agents cannot independently satisfy the binary constraint without running an agreement protocol.

As DSA is a distributed control framework, we solve the problem of the satisfaction of the binary constraint by partitioning the binary constraint into two unary constraints such that the satisfaction of the unary constraints implies the satisfaction of the binary constraint (but not vice versa)~\cite{magnus_heterogenous2016}. 
% Consequently, splitting the binary constraint in this way shrinks the size of the admissible control space defined by the binary constraint.
\begin{equation}
\label{eq:constraint_partition}
    \begin{bmatrix}P_{ij} & Q_{ij}\end{bmatrix}
    \begin{bmatrix}u_i \\ u_j\end{bmatrix} \leq b_{ij} \;\to\;
    \begin{cases} 
    P_{ij} u_i \leq b_{ij}/2 \\
    Q_{ij} u_j \leq b_{ij}/2
    \end{cases}
\end{equation}
The satisfaction of the two unary constraints in (\ref{eq:constraint_partition}) by the respective controllers of agents $i$ and $j$ guarantees safety because the binary constraint still holds. Moreover, the equal partitioning ensures that the agents share an equal responsibility to keep the pairwise state safe.
% Admissible control space for distributed controller. 
The admissible control space for an agent i, denoted by $\mathcal{L}_i$, is an intersection of half-spaces of the hyper-planes defined by the linear constraints. 
\begin{equation}
\label{eq:admissible_control_space}
    \mathcal{L}_i = \{u_i \in U \mid \forall j \in \mathcal{N}_i: A_i u_i \leq b_i \land P_{ij} u_i \leq b_{ij}  \}
\end{equation}

\begin{theorem}
Given a MAS indexed by $\mathcal{M}$ and with dynamics in (\ref{eq:control_affine_MAS}), if the controller for each agent $i \in \mathcal{M}$ chooses an action $u_i \in \mathcal{L}_i$, thereby satisfying the Lie-derivative constraints on the respective CBFs, and $\mathbf{x}(0) \in \mathcal{R}$, then the MAS is guaranteed to be safe. 
\end{theorem}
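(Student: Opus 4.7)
The plan is to reduce the theorem to a pairwise application of Theorem \ref{thm:backgrounf_cbf}, by unwinding the constraint partitioning (\ref{eq:constraint_partition}) and the definition of $\mathcal{L}_i$ in (\ref{eq:admissible_control_space}). First, I would fix an arbitrary neighboring pair $i,j$ with $j \in \mathcal{N}_i$, and observe that the hypothesis $u_i \in \mathcal{L}_i$ forces $A_i u_i \leq b_i$ and $P_{ij} u_i \leq b_{ij}/2$; by symmetry $A_j u_j \leq b_j$ and $Q_{ij} u_j \leq b_{ij}/2$. Adding the last two inequalities yields $[P_{ij}\; Q_{ij}][u_i;u_j]^T \leq b_{ij}$, which is exactly the compact form of the binary Lie-derivative condition (\ref{eq:lie_constraints_dsa_c}); the $A_i u_i \leq b_i$ inequalities are, by definition, restatements of the unary conditions (\ref{eq:lie_constraints_dsa}a) and (\ref{eq:lie_constraints_dsa}b).

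Next, I would apply Theorem \ref{thm:backgrounf_cbf} three times. Applied to agent $i$'s dynamics with CBF $h_i$, the condition $u_i \in \mathcal{L}_i$ gives forward-invariance of $\mathcal{S}_i$; likewise $\mathcal{S}_j$ is forward-invariant. Applied to the combined pair dynamics $(x_i,x_j)$ regarded as a single control-affine system with input $[u_i;u_j]$, the summed constraint from the previous paragraph places the joint input in the corresponding $K(x_i,x_j)$ set, so Theorem \ref{thm:backgrounf_cbf} yields forward-invariance of $\mathcal{S}_{ij}$. Intersecting, the pairwise recoverable set $\mathcal{R}_{ij} = (\mathcal{S}_i\times\mathcal{S}_j)\cap\mathcal{S}_{ij}$ is forward-invariant.

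Finally, I would lift the pairwise conclusion to the MAS: since $\mathbf{x}(0) \in \mathcal{R}$ means $(x_i(0),x_j(0)) \in \mathcal{R}_{ij}$ for every neighboring pair, and each $\mathcal{R}_{ij}$ is forward-invariant under the chosen controls, we get $\mathbf{x}(t)\in\mathcal{R}$ for all $t>0$. Because $\mathcal{S}_i\subseteq\mathcal{A}_i$ and $\mathcal{S}_{ij}\subseteq\mathcal{A}_{ij}$ by construction, we have $\mathcal{R}\subseteq\mathcal{A}$, so all unary and binary safety constraints hold at every time, which is the required MAS safety.

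The main obstacle I anticipate is the pairwise application of Theorem \ref{thm:backgrounf_cbf} to $h_{ij}$: the theorem is stated for a single control-affine system, whereas $h_{ij}$ lives on the joint state space of two agents with two independent controllers. The constraint-partitioning step (\ref{eq:constraint_partition}) is precisely what lets us sidestep this, since each agent can unilaterally satisfy its half and the sum recovers the joint Lie-derivative condition; I would write this reassembly out explicitly. A secondary subtlety, should the neighbor sets $\mathcal{N}_i$ change over time, is ensuring that pairs entering each other's neighborhoods are already in $\mathcal{R}_{ij}$; under the stated hypothesis $\mathbf{x}(0)\in\mathcal{R}$ and static neighborhoods this is immediate, but it would warrant a brief remark for the dynamic case.
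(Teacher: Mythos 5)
Your proposal is correct and follows essentially the same route as the paper: the paper's proof likewise invokes Theorem~\ref{thm:backgrounf_cbf} to get forward-invariance of each $\mathcal{S}_i$ and $\mathcal{S}_{ij}$, then concludes forward-invariance of $\mathcal{R}_{ij}$ and hence of $\mathcal{R}$. The only difference is that you spell out the reassembly of the binary constraint from the two partitioned halves and the application of the CBF theorem to the joint two-agent system, details the paper leaves implicit by referring back to the constraint-partitioning discussion.
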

\begin{proof}
If all the agents choose an action from their respective admissible control spaces $\mathcal{L}_i$, then the forward invariance of the set $\mathcal{S}_i$ for all $i \in \mathcal{M}$ and  
$\mathcal{S}_{ij}$ for all $i \in \mathcal{M}$, $j \in \mathcal{N}_i$ is established by Theorem~\ref{thm:backgrounf_cbf}. Therefore, $\mathcal{R}_{ij}$ is forward invariant for all $i \in \mathcal{M}, j \in \mathcal{N}_i$ and consequently $\mathcal{R}$ is forward invariant.
\end{proof}

% Baseline controller
\vspace{-2.0ex}
\subsection{Baseline Controller}
The BC is a distributed controller with the task to keep the state of the agent in the safe region. For an agent $i$, the control law of the BC depends on its state $x_i$ and the states of its .
%$x_j,  \forall j \in \mathcal{N}_i$. 
In our design, the BC considers only the safety-critical aspects, leaving the mission-critical objectives to the AC.  Specifically, the BC is designed to move the system toward safer states as quickly as possible. This reduces the width of the necessary ``safety margin'' between unsafe and recoverable states, allowing a looser FSC (i.e., allowing the AC to stay in control more often).
%Therefore, the DM is designed to ensure that the control is switched to BC only when the safety violations are imminent and returns back the control to the AC when the system is sufficiently safe. 

We design the BC as the solution to the following constrained multi-objective optimization (MOO) problem where the utility function is a weighted sum of objective functions based on the Lie derivatives of the CBFs $h_i$ and $h_{ij}$ introduced above:
\begin{equation}
\label{eq:bc}
\begin{aligned}
    % u_i = \underset{u}{argmin} -\frac{1}{h_i} (L_f h_{i}+L_g h_{i}u)  -\sum_{j \in \mathcal{N}_i} \frac{1}{h_{ij}} (L_f h_{ij}+L_g h_{ij}u) 
    u_i^* = \underset{u_i}{argmax} \quad & \frac{1}{h_i} (L_f h_{i}+L_g h_{i}u_i)  +\sum_{j \in \mathcal{N}_i} \frac{1}{h_{ij}} (L_f h_{ij}+L_g h_{ij}\begin{bmatrix}u_i\\0\end{bmatrix}) \\
    \textrm{s.t.} \quad & u_i \in \mathcal{L}_i
\end{aligned}
\end{equation}
The bottom component of the column vector in the last term is agent $i$'s prediction for agent $j$'s next control action $u_j$.  Since we consider MASs in which agents are unable to communicate planned control actions, agent $i$ simply predicts that $u_j=0$.  This approach has been shown to work well in prior work on distributed model-predictive control for flocking \cite{Mehmood2018}.

Recall that, by definition, the CBFs quantify the degree of safety of the state with respect to the given  safety constraints, with larger (positive) values indicating safer states. For a given state, the Lie derivative of a CBF is a linear function in the control action. A positive value of the Lie derivative indicates that the proposed action will lead to a state which has a higher CBF value and therefore is safer. 

The solution to the optimization problem in (\ref{eq:bc}) is a control action that maximizes the weighted sum of the Lie derivatives of the CBFs.
We note that in a weighted-sum formulation of a MOO problem, it is possible that some objective functions are negative in the optimal solution.  We ensure the selected action $u_i$ is safe by constraining $u_i$ to be in the admissible control space $\mathcal{L}_i$, defined in (\ref{eq:admissible_control_space}).

% Weights
The weights in the utility function in (\ref{eq:bc}) prioritize certain safety constraints over others.  We use state-dependent weights which are the inverses of the CBF functions, thereby giving more weight to maximizing the Lie derivatives of CBFs corresponding to safety constraints that are closer to being violated.

\vspace{-2.0ex}
\subsection{Decision Module}

Each agent's DM implements the switching logic for both forward switching and reverse switching. Control is switched from the AC to the BC if the \emph{forward switching condition} (FSC) is true. Similarly, control is reverted back to the AC (from the BC) if the \emph{reverse switching condition} (RSC) is true. For an agent $i$, the state of the DM is denoted as $DM_i \in \{AC, BC\}$, with $DM_i = AC$ ($DM_i = BC$) indicating that the advanced (baseline) controller is in control. DSA starts with all agents in the AC mode; i.e., $DM_i(t) = AC$ for all $t \leq 0$ and $i\in \mathcal{M}$; this is justified by the assumption that $\mathbf{x}(0)\in \mathcal{R}$. For $t > 0$, the DM state is given by:
\begin{equation}
\label{eq:dm}
    DM_i(t)= 
\begin{cases}
    AC          & \text{if } DM_i(t-1)=BC \text{ and } RSC(x_{\mathcal{N}_i})\\
    BC          & \text{if } DM_i(t-1)=AC \text{ and } FSC(x_{\mathcal{N}_i})\\
    DM_i(t-1)   & \text{otherwise}
\end{cases}
\end{equation}
where $x_{\mathcal{N}_i}$
% = \{x_i, x_j \ \forall j \in \mathcal{N}_i\}
is the vector containing the states of agent $i$ and its neighbors. 
% Add image
% \begin{figure}[t]
%     \centering
%     \includegraphics[width=8cm]{}
%     \caption{A depiction of the recoverable set with the switching boundaries. The red region is contained in the blue region which in turn is contained in the green region.}
%     \label{fig:recoverable_set}
% \end{figure}
%FSC

We derive the switching conditions from the CBFs as follows. 
%The key aspect of forward switching is to switch control to the BC if the safety of the system is in jeopardy, but do so while the state is still recoverable. Consider an agent $i$ at time-step $t$ with $DM_i(t) = AC$. 
To ensure safety, the FSC must be true in a state $x_{\mathcal{N}_i}(t)$ if an unrecoverable state is reachable from $x_{\mathcal{N}_i}(t)$ in one time step $\eta$.
For a CBF function, in a given state, we define a \emph{worst-case} action to be an action that minimizes the Lie derivative of the CBF. The check for one-step reachability of an unrecoverable state is based on the minimum value of the Lie derivative of the CBFs, which corresponds to the worst-case actions by the agents. Hence, for each CBF $h$, we define a minimum threshold value $\lambda_h(x_{\mathcal{N}_i})$ equal to the magnitude of the minimum of the Lie derivative of the CBF times $\eta$, and we switch to BC if, in the current state, the value of any CBF $h$ is less than $\lambda_h(x_{\mathcal{N}_i})$.  This results in a FSC of the following form:
\begin{equation}
\label{eq:FSC_DSA}
FSC(x_{\mathcal{N}_i}) = (h_i < \lambda_{h_i}(x_{\mathcal{N}_i})) \vee  (\exists j \in \mathcal{N}_i \ | \ h_{ij} < \lambda_{h_{ij}}(x_{\mathcal{N}_i})) 
\end{equation}
Thus, the one-step reachability check shrinks the size of the recoverable set by an amount equal to the maximum change that can occur from the current state in one control period, and the switch occurs if the current state is outside this smaller set.
% (see Figure \ref{fig:recoverable_set}). 

%RSC
We derive the RSC using a similar approach, except based on an $m$-time-step reachability check with $m>1$, in order to prevent frequent switching between AC and BC.  The RSC holds if, in the current state, the value of each CBF $h$ is greater than the threshold $m\lambda_h(x_{\mathcal{N}_i})$.
\begin{equation}
\label{eq:RSC_DSA}
RSC(x_{\mathcal{N}_i}) = (h_i > m\lambda_{h_i}(x_{\mathcal{N}_i})) \wedge  (\forall j \in \mathcal{N}_i \ | \ h_{ij} > m\lambda_{h_{ij}}(x_{\mathcal{N}_i})) 
\end{equation}
This definition of the RSC ensures that, when control is switched to AC, the state is safe and the FSC will not hold for at least $m$ time steps.  
%does not hold i.e., ${RSC(x_{\mathcal{N}_i}) \Rightarrow \neg FSC(x_{\mathcal{N}_i})}$. Additionally, RSC is stricter than the FSC, which prevents excessive switching. 

% there exists a neighboring agent $j \in \mathcal{N}_i$ such that the pairwise barrier function $h_{ij}$ is lesser than a cut-off value $\lambda_{ij}$. Here, $\lambda_{ij}$ is the maximum decrease possible in the value of $h_{ij}$ at $\mathbf{s}_{ij}$, and is defined as the minimum of the lie-derivative of $h_{ij}$ times the control period $\eta$. 

\vspace{-2.0ex}
\subsection{Safety Theorem}
% Main Result: Theorem of safety of DSA
Our main result is the following safety theorem for DSA.
% In particular, the proof established that for any given agent $i \in \mathcal{M}$, irrespective of its current mode of operation, its next state will be safe. 

\begin{theorem}
Given an MAS indexed by $\mathcal{M}$ with dynamics in (\ref{eq:control_affine_MAS}), if each agent operates under DSA with the BC as defined in (\ref{eq:bc}), the DM as defined in (\ref{eq:dm}), and $\mathbf{x}(0) \in \mathcal{R}$, then the MAS will remain safe.
\end{theorem}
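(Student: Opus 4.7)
The plan is to argue by induction on the discrete control steps $t=0,\eta,2\eta,\dots$ that $\mathbf{x}(t)\in\mathcal{R}$, and then invoke continuity of the dynamics together with the CBF machinery to conclude that the state stays in $\mathcal{R}\subseteq\mathcal{A}$ between control updates. The base case is the hypothesis $\mathbf{x}(0)\in\mathcal{R}$. For the inductive step, I would fix an arbitrary agent $i$ and reason about every pair $(i,j)$ with $j\in\mathcal{N}_i$, since by construction $\mathbf{x}\in\mathcal{R}$ iff $(x_i,x_j)\in\mathcal{R}_{ij}$ for all such pairs, which in turn reduces (via constraint partitioning) to $x_i\in\mathcal{S}_i$ for every $i$ plus the pairwise CBF condition $h_{ij}\geq 0$.

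The heart of the argument is a case split on the DM state of each agent at step $t$. If $DM_i(t)=BC$, then $u_i\in\mathcal{L}_i$ by the feasibility constraint in (\ref{eq:bc}), so the Lie-derivative inequalities (\ref{eq:lie_constraints_dsa}) hold for $h_i$ and for the $i$-share of every binary CBF $h_{ij}$. By Theorem~\ref{thm:backgrounf_cbf} applied pairwise, combined with the partitioning (\ref{eq:constraint_partition}) which guarantees that the two unary halves together imply the full binary Lie-derivative condition regardless of how agent $j$ acts (AC or BC), the pair $(x_i,x_j)$ cannot leave $\mathcal{R}_{ij}$ during $[t,t+\eta]$. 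If $DM_i(t)=AC$, then by the induction hypothesis and the fact that $DM_i(t-1)=AC$ did not trigger the FSC at step $t$ (or $DM_i$ just transitioned from BC via the RSC, which is strictly stronger than the negation of the FSC), every CBF value satisfies $h\geq \lambda_h(x_{\mathcal{N}_i})$. Here I would invoke the definition of $\lambda_h$ as the magnitude of the worst-case one-step Lie-derivative drop times $\eta$: this is exactly the upper bound on how much any CBF can decrease over a single control period under any admissible input, so $h$ remains nonnegative throughout $[t,t+\eta]$ even if both agents $i$ and $j$ execute their respective worst-case actions under AC control.

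To stitch the pair analysis into a global invariant, I would observe that the argument for each pair is independent of which mode the other agent is in, because the partitioned unary constraints and the worst-case Lie-derivative bound $\lambda_h$ are both defined per-agent. Hence the conjunction over all $i\in\mathcal{M}$ and $j\in\mathcal{N}_i$ gives $\mathbf{x}(t+\eta)\in\mathcal{R}$, closing the induction. Finally, since $\mathcal{R}\subseteq\mathcal{A}$ by construction of the CBFs $h_i$, $h_{ij}$ as super-level-set encodings of the unary and binary safety constraints, safety $\mathbf{x}(t)\in\mathcal{A}$ for all $t\geq 0$ follows.

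The main obstacle I anticipate is the AC branch of the case split: one must carefully verify that the FSC threshold $\lambda_h$, computed from the worst-case Lie derivative, genuinely upper-bounds the decrease of $h$ over an entire continuous time interval of length $\eta$ rather than only at a single instant. Depending on how $\lambda_h$ is defined, this may require either a mild Lipschitz/bounded-derivative assumption on $h$ along trajectories or an explicit supremum of $|L_f h + L_g h\,u|$ over $u\in U$ and states in a tube around $x_{\mathcal{N}_i}(t)$. A second subtlety is handling asynchronous pairs where agents $i$ and $j$ disagree on mode; the constraint-partitioning step (\ref{eq:constraint_partition}) is the key device that decouples their responsibilities, and I would state explicitly that partitioning makes the per-agent safety obligation sufficient regardless of the neighbor's controller choice.
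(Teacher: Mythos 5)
Your proof follows essentially the same route as the paper's: induction over control steps, a case split on each agent's DM mode, with BC mode handled via the admissible control space $\mathcal{L}_i$ and the paper's Theorem~2, and AC mode handled via the one-step reachability margin $\lambda_h$ built into the FSC~(\ref{eq:FSC_DSA}). One claim in your BC branch is wrong as stated, however: the partitioning~(\ref{eq:constraint_partition}) does \emph{not} imply the binary Lie-derivative condition ``regardless of how agent $j$ acts''---it only does so when \emph{both} agents satisfy their respective halves, and an AC-mode neighbor $j$ is under no obligation to satisfy $Q_{ij}u_j \leq b_{ij}/2$. What actually protects a mixed pair (one agent in BC, the other in AC) is that the AC-mode agent's FSC was false, i.e., $h_{ij}\geq\lambda_{h_{ij}}$ with $\lambda_{h_{ij}}$ bounding the worst-case one-step drop under \emph{any} actions by either agent; this is the argument your AC branch supplies, and it is precisely why the paper's proof explicitly splits a BC-mode agent's neighbors into $\mathcal{N}_i^{AC}$ and $\mathcal{N}_i^{BC}$ and treats them by different mechanisms. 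Since your final conjunction ranges over all ordered pairs and therefore picks up every mixed pair from the AC agent's side, the overall argument still closes, but the BC-branch justification should be corrected to match. On the positive side, your concern about whether $\lambda_h$ genuinely bounds the decrease of $h$ over the entire interval $[t,t+\eta]$ rather than only instantaneously is a real subtlety that the paper's proof glosses over.
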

% \begin{proof}
% Consider an agent $i$ at the time-step $t$ with $DM_i(t)=AC$. Then, at the next state $x_i(t+\eta)$, due to the one-step reachability check in the FSC,  the CBFs $h_i(x_i(t+\eta)) \geq 0$ and $h_{ij}(x_{i}(t+\eta), x_{j}(t+\eta)) \geq 0 \ \forall j \in \mathcal{N}_i$. Hence, $x_j(t+\eta)$ is recoverable. 
% Consider an agent $i$ at the time-step $t$ with $DM_i(t)=BC$. We divide the neighbors of $i$ into two sets $\mathcal{N}_i^{BC}$ and $\mathcal{N}_i^{AC}$, as the agents operating in BC mode and AC mode, respectively. Then, as the control action $u_k \in \mathcal{L}_k$ for agent $i$ and its neighbors $\mathcal{N}_i^{BC}$, by theorem (\ref{thm:backgrounf_cbf}), $h_i(x_i(t+\eta)) \geq 0$ and $h_{ij}(x_i(t+\eta), x_i(t+\eta)) \geq 0 \ \forall j \in \mathcal{N}_i^{BC}$. As for the neighbors in AC mode, due to the one-step reachability check in the FSC, the state $x_i(t+\eta)$ the CBFs $h_{ij} \geq 0 \ \forall j \in \mathcal{N}_i^{AC}$. Hence, $x_i(t+\eta)$ is recoverable.
% We have proven that, for any agent $i \in \mathcal{M}$, if at the time-step $t$, $x_i(t)$ is recoverable, then $x_i(t+\eta)$ is also recoverable. Hence, as $x(0) \in \mathcal{R}$, by induction, $x(t) \in \mathcal{R} \ \forall t > 0$.
% \end{proof}

\begin{proof}
% General idea of the proof
The proof proceeds by considering both DM states for an arbitrary agent $i$ and establishing that its next state is safe. 
% Agent i in mode AC at t 
First, consider an agent $i$ at time $t$ with $DM_i(t)=AC$. As the FSC is false, the one-step reachability check in the FSC ensures that the CBFs for unary and binary safety constraints are strictly positive at the next state $x_i(t+\eta)$, i.e. $h_i(x_i(t+\eta)) > 0$ and $\forall j \in \mathcal{N}_i: h_{ij}(x_{i}(t+\eta), x_{j}(t+\eta)) > 0$, hence the next state is recoverable.
% Agent i in mode BC at t 
Next, consider an agent $i$ at time $t$ with $DM_i(t)=BC$. We divide the neighbors of $i$ into two sets based on their DM states: the sets of neighbors in AC mode and BC mode are denoted as $\mathcal{N}_i^{AC}$ and $\mathcal{N}_i^{BC}$, respectively.
% Agent i in mode BC at t, neighbors in BC mode
The agents in BC mode choose their control actions from their corresponding admissible control spaces as defined in Eq.~\ref{eq:admissible_control_space}. Hence, according to \emph{Theorem}~2, these agents will satisfy unary safety constraints and pairwise safety constraints among themselves. 
% Consequently, we can argue that next state of agent $i$ will satisfy all binary safety constraints with the all $j \in \mathcal{N}_i^{BC}$.
% Then, as the control action $u_k \in \mathcal{L}_k$ for agent $i$ and its neighbors $\mathcal{N}_i^{BC}$, by \emph{Theorem} (\ref{thm:backgrounf_cbf}), $h_i(x_i(t+\eta)) \geq 0$ and $h_{ij}(x_i(t+\eta), x_i(t+\eta)) \geq 0 \ \forall j \in \mathcal{N}_i^{BC}$. 
% Agent i in mode BC at t, neighbors in AC mode
As for the neighbors in AC mode, due to the one-step reachability check in their FSC, in the state $x_i(t+\eta)$, the pairwise CBFs satisfy $h_{ij}(x_i(t+\eta), x_j(t+\eta)) \geq 0$ for all $j \in \mathcal{N}_i^{AC}$. Hence, $x_i(t+\eta)$ is recoverable for $DM_i(t) = BC$.
We have proven that, for any agent $i$ and time step $t$, if $x_i(t)$ is recoverable, then $x_i(t+\eta)$ is recoverable. By assumption, $\mathbf{x}(0) \in \mathcal{R}$.  Therefore, by induction, $\mathbf{x}(t) \in \mathcal{R}$ for $t > 0$.
\end{proof}

\section{Flocking Case Study}
\label{sec:flocking_case_study}
%This section describes the problem setup and experimental results for the flocking case study.
%\subsection{The Flocking Problem}
We evaluate DSA on the distributed flocking problem with the goal of preventing inter-agent collisions. 
% In the experimental evaluation, we test with two variants of uncertified ACs: (i) The Reynold's Flocking Model \cite{reynolds1987}, and (ii) The Distributed Model Predictive Controller (DMPC) \cite{Mehmood2018} and observe that the DSA successfully preserves safety for both the ACs.
Consider an MAS consisting of $n$ robotic agents, indexed by $\mathcal{M} = \{1,\ldots,n\}$ with double integrator dynamics:
% \begin{equation} 
% \label{eq:flocking_dynamics_discrete}
% \begin{split}
%     p_i(k+1) &= p_i(k) + dt \cdot v_i(k),\quad |v_i(k)| < \bar{v} \\
%     v_i(k+1) &= v_i(k) + dt \cdot a_i(k),\quad |a_i(k)| < \bar{a}  
% \end{split}
% \end{equation}
\begin{equation}
\label{eq:flocking_dynamics_continuous}
\begin{bmatrix}
\dot{p}_i\\
\dot{v}_i 
\end{bmatrix} 
=
\begin{bmatrix}
0 & I_{2 \times 2}\\ 
0 & 0 
\end{bmatrix}
\begin{bmatrix}
p_i\\
v_i 
\end{bmatrix}
+
\begin{bmatrix}
0\\
I_{2 \times 2} 
\end{bmatrix}
a_i
\end{equation}
where $p_i$, $v_i$, $a_i$ $\in \mathbb{R}^2$ are the position, velocity and acceleration of agent $i \in \mathcal{M}$, respectively.
% at time step $k$.
% , and $dt \in \mathbb{R}^+$ is the time step
The magnitudes of velocities and accelerations are bounded by $\bar{v}$ and $\bar{a}$, respectively.
Acceleration $a_i$ is the control input for agent $i$.
% at time step $k$. 
As DSA is a discrete-time protocol, the state of the DM and the $a_i$'s are updated every $\eta$ seconds.
% i.e., $\eta \cdot dt$ is the control period.
% Vector denoting the state of the entire flock
The \emph{state} of an agent $i$ is denoted by the vector $s_i = [p_i^T v_i^T]^T$.
The \emph{state} of the entire flock at time $t$, is denoted by the vector ${\mathbf{s}(t) = [\mathbf{p}(t)^T \ \mathbf{v}(t)^T]^T \in \mathbb{R}^{4n}}$, where $\mathbf{p}(t) = [p_1^T(t) \cdot \cdot \cdot p_n^T(t)]^T$ and $\mathbf{v}(t) = [v_1^T(t) \cdot \cdot \cdot v_n^T(t)]^T$ are the vectors respectively denoting the positions and velocities of the flock at time $t$.

%Neighborhood and sensing
% An agents $i$ can accurately sense the positions and velocities of nearby agents within a given distance $r$.
We assume that an agent can accurately sense the positions and velocities of nearby agents within a fixed distance $r$.
The set of the $\emph{spatial neighbors}$ of agent $i$
%, $\mathcal{N}_i \subseteq \mathcal{M}$
is defined as ${\mathcal{N}_{i}(\mathbf{p}) = \left\lbrace j \in \mathcal{M} \mid j\ne i \land \| p_i - p_j \| < r\right\rbrace}$,
% \begin{equation}
%     \mathcal{N}_{i}(\mathbf{p}) = \left\lbrace j \in \mathcal{M} \mid j\ne i \land \| p_i - p_j \| < r\right\rbrace,
% \end{equation}
where $\| \cdot \|$ denotes the Euclidean norm.
%Local state
% An agent $i$ and its neighboring agents $\mathcal{N}_i$ define a sub-flock $\mathcal{M}_i = \left\lbrace i, j \in \mathcal{N}_i \right\rbrace$. 
% The state of agents in the sub-flock $\mathcal{M}_i$ is denoted by $s_{\mathcal{N}_i}$.
% \begin{equation}
%     \mathbf{s}_i(k) = [\mathbf{p}_i(k)^T \ \mathbf{v}_i(k)^T]^T
% \end{equation}
% where ${\mathbf{p}_i(k) = [p_i(k)^T, p_1^{(i)}(k)^{T} \cdot \cdot \cdot p_m^{(i)}(k)^{T}]^T}$ is the vector of the positions and ${\mathbf{v}_i(k) = [v_i(k)^T, v_1^{(i)}(k)^{T} \cdot \cdot \cdot v_m^{(i)}(k)^{T}]^T}$ is the vector of the velocities of the agent $i$ and its $m$ neighboring agents. 
%%Pairwise state
% The state of a pair agents $\mathbf{s}_{ij}(k) \in \mathbb{R}^4$ is defined as:
% \begin{equation}
%     \mathbf{s}_{ij}(k) = [p_i^T(k) \ v_i^T(k) \ p_j^T(k) \ v_j^T(k)]^T 
% \end{equation}
For ease of notation, we sometimes use $\mathbf{s}$ and $\mathbf{s}_i$ to refer to the state variables $\mathbf{s}(t)$ and $\mathbf{s}_i(t)$, respectively, without the time index.

%Introduction of Safety(binary constraints) and recoverability(CBFs)
The MAS is characterized by a set of operational constraints which include physical limits and safety properties. States that satisfy the operational constraints are called \emph{admissible}, and are denoted by the set $\mathcal{A} \in \mathbb{R}^{4n}$.
% A state $\mathbf{s} \in \mathcal{A}$, if 
The desired safety property is that no pair of agents is in a ``state of collision''.
% For the flocking MAS, the set of admissible states $\mathcal{A} \in \mathbb{R}^{4n}$ is the feasible region of the pairwise collision avoidance constraints and the operational limits on the velocity magnitude. 
A pair of agents is considered to be in a \emph{state of collision} if the Euclidean distance between them is less than a threshold distance $d_{min} \in \mathbb{R}^+$, resulting in a binary safety constraint of the form: ${\left \| p_i - p_j \right \| - d_{min} \geq 0} \ \forall \ i \in \mathcal{M}, j \in \mathcal{N}_i$. 
% Hence, a state $\mathbf{s} \in \mathcal{A}$, if no pair of agents is in a state of collision. 
Similarly, a state $\mathbf{s}$ is $recoverable$ if all pairs of agents can brake (de-accelerate) relative to each other without colliding. Otherwise, the state $\mathbf{s}$ is considered $\emph{unrecoverable}$. 
% Consequently, an unrecoverable state is bound to lead to an unsafe state. 
% In order to guarantee the safety of the flock, DSA ensures forward invariance of the set of recoverable states. The set of recoverable states is formally defined in terms of CBFs which are extensively used for safety critical applications including multi-agent flocking systems~\cite{magnus_heterogenous2016, magnus2015}.  
\vspace{-2.0ex}
\subsection{Synthesis of Control Barrier Function}
%CBFs are introduced in the background section
%Define
Let $\mathcal{R}_{ij} \subset \mathbb{R}^{8}$ be the set of recoverable states for a pair of agents $i,j \in \mathcal{M}$. The flock-wide set of recoverable states, denoted by $\mathcal{R} \subset \mathbb{R}^{4n}$, is defined in terms of $\mathcal{R}_{ij}$.
% such that all pairs of agents are recoverable.
% The goal of DSA is to guarantee the forward invariance of the set $\mathcal{R}$, i.e., if $\mathbf{s}(0) \in \mathcal{R}$ then $\mathbf{s}(t) \in \mathcal{R},\ \forall t \geq 0$. 
As in \cite{magnus2015}, the set $\mathcal{R}_{ij}$ is defined as the super-level set of a pairwise CBF $h_{ij}:\mathbb{R}^{8} \to \mathbb{R}$: ${\mathcal{R}_{ij} = \left\lbrace s_i, s_j\mid h_{ij}(s_i, s_j) \geq 0 \right\rbrace}$.
% \begin{gather}
%     \label{eq:S_ij}
%     \mathcal{R}_{ij} = \left\lbrace \mathbf{s}_{ij}\mid h_{ij}(\mathbf{s}_{ij}) \geq 0 \right\rbrace \\
%     \label{eq:S}
%     \mathcal{R} = \prod_{i \in \mathcal{M}} \left \{ \bigcap_{ \substack{j \in \mathcal{N}_i \\ i \neq j}} \ \mathcal{R}_{ij} \right \}
% \end{gather}
% The flock-wide set of recoverable states $\mathcal{R} \subset \mathbb{R}^{4n}$ is the Cartesian product of the pairwise recoverable sets $\mathcal{R}_{ij}$. 
The flock-wide set of recoverable states $\mathcal{R} \subset \mathcal{A}$ is defined as the set of system states in which $(s_i,s_j) \in \mathcal{R}_{ij}$ for every pair of neighboring agents $i,j$.
% The function $h(\mathbf{s})$ is considered a CBF if it can be used to synthesize a safety controller which can guarantee the forward invariance of the safe set $\mathcal{R}$. 

% Mathematical form of h_ij
In accordance with~\cite{magnus2015}, the function $h_{ij}(s_i, s_j)$ is based on a safety constraint over a pair of agents $i,j \in \mathcal{M}$.
% which aims to keep the pairwise state $\mathbf{s}_{ij}$ recoverable.
The safety constraint ensures that for any pair of agents, the maximum braking force can always keep the agents at a distance greater than $d_{min}$ from each other. As introduced earlier, $d_{min}$ is the threshold distance that defines a collision. Considering that the tangential component of the relative velocity, denoted by $\Delta v$, causes a collision, the constraint regulates $\Delta v$ by application of maximum acceleration to reduce $\Delta v$ to zero. Hence, the safety constraint can be represented as the following condition on the inter-agent distance ${\left \| \Delta \mathbf{p}_{ij}\right \| = \left \| p_i - p_j \right \|}$, the braking distance $(\Delta v)^2/4\bar{a}$, and the safety threshold distance $d_{min}$:
\begin{gather} 
\label{eq:safety_constraint}
        \left \| \Delta \textbf{p}_{ij} \right \| - \frac{(\Delta v)^2}{4\bar{a}} \geq d_{min} \\
\label{eq:cbf}
        h_{ij}(s_i, s_j) = \sqrt{4\bar{a}( \left \| \Delta \textbf{p}_{ij} \right \| - d_{min}} ) - \Delta v \geq 0
\end{gather}
The braking distance is the distance covered while the relative speed reduces from $\Delta v$ to zero under a deceleration of $2\bar{a}$. The constraint in Eq. (\ref{eq:safety_constraint}) is re-arranged to get the CBF $h_{ij}$ given in Eq. (\ref{eq:cbf}). 
% The lie-derivative based constraint derivation
% Introduce the set of admissible control space. Mention that the BC is chosen from this set. With the aim to increase the take the system towards RSC. 
%Consider the time-derivative of $h(\mathbf{s})$ in the direction of the state evolution, $\dot{h}_{ij} = \frac{\partial h_{ij}}{\partial \mathbf{s}} \dot{\mathbf{s}}$.

% The lie derivative of the CBF
% The Lie derivative of the CBF $h_{ij}(s_i, s_j)$ in Eq. (\ref{eq:cbf}) is as follows:
% \begin{equation}
% \label{eq:flocking_lie_der}
%     \frac{\Delta \mathbf{p}_{ij}^T (\Delta \mathbf{a}_{ij}) }{\left \| \Delta \mathbf{p}_{ij} \right \|} 
% - \frac{(\Delta \mathbf{v}_{ij}^T \Delta \mathbf{p}_{ij})^2}{\left \| \Delta \mathbf{p}_{ij} \right \|^3}
% + \frac{\left \| \Delta \mathbf{v}_{ij} \right \|^2}{\left \| \Delta \mathbf{p}_{ij} \right \|}
% + \frac{2\bar{a}\Delta \mathbf{v}_{ij}^T \Delta \mathbf{p}_{ij}}{\left \| \Delta \mathbf{p}_{ij} \right \|\sqrt{ 4\bar{a} (\left \| \Delta \mathbf{p}_{ij} \right \|- d_{min})}}
% \end{equation}
Combining~(\ref{eq:cbf}) and~(\ref{eq:lie_constraints_dsa_c}), we arrive at the linear constraint on the accelerations for agents $i$ and $j$, which constrains the Lie derivative of the CBF in~(\ref{eq:cbf}) to be greater than $-\alpha(h_{ij})$. We set ${\alpha(h_{ij}) = \gamma h_{ij}^3}$, as in~\cite{magnus2015}, resulting in the following constraint on the accelerations of agents $i,j$:
\begin{equation}
\begin{split}
\label{eq:flocking_lie_der}
    \frac{\Delta \mathbf{p}_{ij}^T (\Delta \mathbf{a}_{ij}) }{\left \| \Delta \mathbf{p}_{ij} \right \|} 
- \frac{(\Delta \mathbf{v}_{ij}^T \Delta \mathbf{p}_{ij})^2}{\left \| \Delta \mathbf{p}_{ij} \right \|^3}
&+ \frac{\left \| \Delta \mathbf{v}_{ij} \right \|^2}{\left \| \Delta \mathbf{p}_{ij} \right \|} \\
&+ \frac{2\bar{a}\Delta \mathbf{v}_{ij}^T \Delta \mathbf{p}_{ij}}{\left \| \Delta \mathbf{p}_{ij} \right \|\sqrt{ 4\bar{a} (\left \| \Delta \mathbf{p}_{ij} \right \|- d_{min})}}
\geq
- \gamma h_{ij}^3
\end{split}
\end{equation}
where the left-hand side is the Lie derivative of the CBF $h_{ij}$ and ${\Delta \mathbf{p}_{ij} = p_i - p_j}$, ${\Delta \mathbf{v}_{ij} = v_i - v_j}$, and ${\Delta \mathbf{a}_{ij} = a_i - a_j}$ are the vectors representing the relative position, the relative velocity, and the relative acceleration of agents $i$ and $j$, respectively.
We further note that the binary constraint~(\ref{eq:flocking_lie_der}) can be represented as ${\left[\begin{smallmatrix}P_{ij}&Q_{ij}\end{smallmatrix} \right] \left[\begin{smallmatrix}a_i\\a_j\end{smallmatrix} \right] \leq b_{ij}}$, and hence it can be split into two unary constraints (${P_{ij} u_i \leq b_{ij}/2}$ and ${Q_{ij} u_j \leq b_{ij}/2}$), following the convention in (\ref{eq:constraint_partition}).
% The expressions for $P_{ij}$, $Q_{ij}$, and $b_{ij}$ are as follows:
% \begin{equation}
% \begin{gathered}
% P_{ij} = \frac{\Delta \mathbf{p}_{ij}^T}{\left \| \Delta \mathbf{p}_{ij} \right \|} \ \ \ \ \ \ \
% Q_{ij} = -\frac{\Delta \mathbf{p}_{ij}^T}{\left \| \Delta \mathbf{p}_{ij} \right \|} \\
% b_{ij} = - \gamma h_{ij}^3
% + \frac{(\Delta \mathbf{v}_{ij}^T \Delta \mathbf{p}_{ij})^2}{\left \| \Delta \mathbf{p}_{ij} \right \|^3}
% - \frac{\left \| \Delta \mathbf{v}_{ij} \right \|^2}{\left \| \Delta \mathbf{p}_{ij} \right \|}
% - \frac{2\bar{a}\Delta \mathbf{v}_{ij}^T \Delta \mathbf{p}_{ij}}{\left \| \Delta \mathbf{p}_{ij} \right \|\sqrt{ 4\bar{a} (\left \| \Delta \mathbf{p}_{ij} \right \|- d_{min})}}
% \end{gathered}
% \end{equation}
The set of safe accelerations for an agent $i \in \mathcal{A}$, denoted by $\mathcal{K}_i(\mathbf{s}_i) \subset \mathbb{R}^2$, is defined as the intersection of the half-planes defined by the Lie-derivative-based constraints, where each neighboring agent contributes a single constraint:
\begin{equation}
\label{eq:set_safe_acc}
    \mathcal{K}_i(\mathbf{s}_i) = \left\lbrace a_i \in \mathbb{R}^2 \mid P_{ij} u_i \leq b_{ij}/2, \ \forall j \in \mathcal{N}_i \right\rbrace
\end{equation}
With the CBF for collision-free flocking defined in~(\ref{eq:cbf}) and the admissible control space defined in~(\ref{eq:set_safe_acc}), the BC, RSC, and FSC follow from (\ref{eq:bc}), (\ref{eq:FSC_DSA}), and (\ref{eq:RSC_DSA}), respectively.
\vspace{-2.0ex}
\subsection{Advanced Controller}
We use the Reynolds flocking model~\cite{reynolds1987} as the AC. In the Reynolds model, the acceleration $a_i$ for each agent is a weighted sum of three acceleration terms, based on simple rules of interaction with the neighboring agents: \emph{separation} (move away from your close neighbors), \emph{cohesion} (move towards the centroid of your neighbors), and \emph{alignment} (match your velocity with the average velocity of your neighbors). 
The acceleration for agent $i$ is ${a_i = w_s a_i^{s} + w_c a_i^{c} + w_{al} a_i^{al}}$,
% Hence, the final acceleration of the agent $i$ is:
% \begin{equation}
% \label{eq:reynolds}
%     a_i = w_s a_i^{s} + w_c a_i^{c} + w_{al} a_i^{al} 
% \end{equation}
where ${w_s, w_c, w_{al} \in \mathbb{R}^+}$ are the scalar weights and ${a_i^s, a_i^c, a_i^{al} \in \mathbb{R}^2}$ are the acceleration terms corresponding to separation, cohesion, and alignment, respectively. 
% We note that the Reynolds' model does not have any safety guarantees with respect to the inter-agent collisions.
We note that the Reynolds model does not guarantee collision avoidance.
Nevertheless, when the flock stabilizes, the average distance to the closest neighbors are determined by the choice of the weights of the interaction terms.
\vspace{-2.0ex}
\subsection{Experimental Results}
The number of agents in the MAS is $n = 15$. The other parameters used in the experiments are $r = 4$, $\bar{a} = 5$, $\bar{v} = 2.5$, $d_{min} = 2$, and $\eta = 0.1$s. The length of the simulations is 50 seconds. The initial positions and the initial velocities are uniformly sampled from $[-10,10]^2$ and $[-1,1]^2$, respectively, and we ensure that the initial state is recoverable.
% ; i.e., the CBF in (\ref{eq:cbf}) is greater than equal to zero for all pairs of agents. 
The weights of the Reynolds' model terms are picked experimentally to ensure that no pair of agents are in a state of collision in the steady state. They are set to $w_s = 3$, $w_c = 1.5$, and $w_{al} = 0.5$. 
%TODO: Improve this sentence
% \um{Improve} For a fair comparison with DSA, the choice for the weights of the Reynolds' model terms is such that in the steady state, for all agents, the distance to the closest neighbors is greater than the threshold minimum distance $d_{min}$.   
% no two agents are so close such that they cannot avoid a collision even when resorting to maximal accelerations.
% Results figures
\begin{figure}[t]
\centering
%trim=left bottom right top, clip
\includegraphics[width=0.45\textwidth, trim=1.3cm 8.45cm 1cm .30cm, clip]{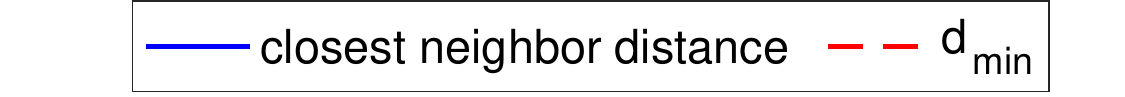}\\ \vspace*{-9pt}
\subfloat[Reynolds Model]{\includegraphics[width=.49\textwidth]{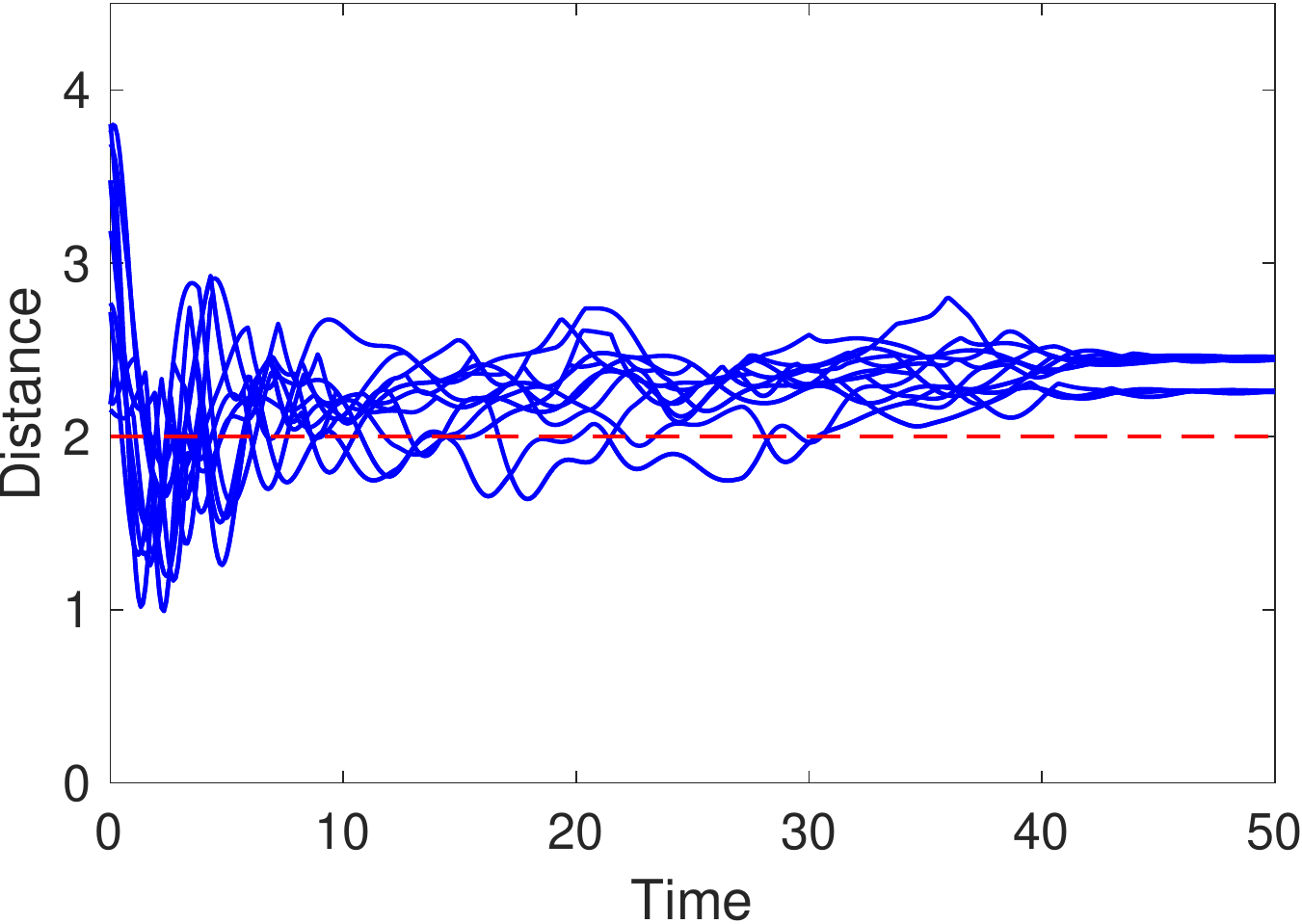}}\hfill
\subfloat[Reynolds Model with DSA]{\includegraphics[width=.49\textwidth]{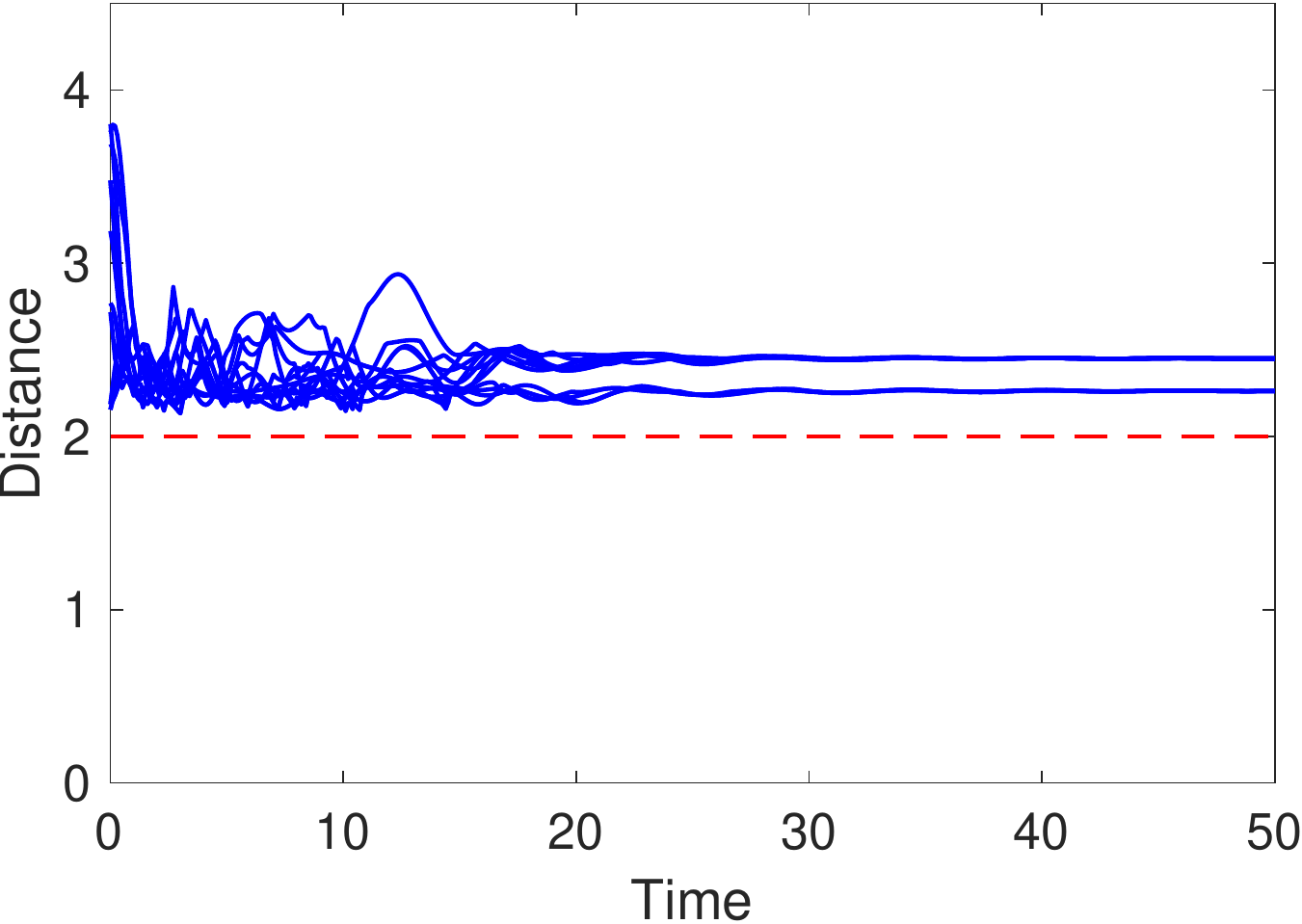}}\hfill
\caption{Results for a flock of size~15, with and without DSA.}\label{fig:plots}
\vspace*{-4ex}
\end{figure}

We performed two simulations, starting from the same initial configuration.
In the first simulation, Reynolds model is used to control all agents for the duration of the simulation. 
% Whereas, in the second simulation, the AC is Reynolds's Model and is augmented with the DSA based BC and DM.
In the second simulation, Reynolds model is wrapped with a verified safe BC and DM designed using DSA. 
Videos of both simulations are available online.\footnote{\url{https://streamable.com/zn2bl5}\\ \url{https://streamable.com/hetraw}}

To recall, the safety property is that all pairs of agents maintain a distance greater than $d_{min}$ from each other.
Fig.~\ref{fig:plots} plots, for the duration of the simulations, the distance between the agents and their closest neighbors, %mathematically represented as
i.e., ${\scriptstyle \underset{j \in \mathcal{N}_i}{min} \left \| p_i - p_j \right \|}$. 
% In the Fig. \ref{fig:plots}, for each agent $i$, we plot the distance to its closest neighbor; i.e., ${\scriptstyle \underset{j \in \mathcal{N}_i}{min} \left \| p_i - p_j \right \|}$. 
As evident from Fig.~\ref{fig:plots}(a), Reynolds model results in multiple safety violations before the flock stabilizes at around 40 seconds. In contrast, as shown in Fig.~\ref{fig:plots}(b), DSA preserves safety, maintaining a separation greater than $d_{min}$ between all agents. DSA has an additional benefit of stabilizing the flock much earlier (around 18 seconds).
% We further note that each agent operates in the BC mode just $2.47$ percent of the total duration on average, indicating that DSA is not invasive. 
We further note that the average time the agents spent in BC mode is only $2.47$ percent of the total duration of the simulation, indicating that DSA is largely non-invasive.

\section{Way-point Control Case Study}
\label{sec:wp_case_study}

\begin{figure}[t]
\centering
%trim=left bottom right top, clip
% \includegraphics[width=0.45\textwidth, trim=1.3cm 8.45cm 1cm .30cm, clip]{fig/legend.pdf}\\ \vspace*{0pt}
\subfloat[The trajectories of the agents passing through the way-points. Red/blue segments indicate AC/BC modes.]{\includegraphics[width=.49\textwidth, trim=1.5cm 1.5cm 1.5cm 1.5cm]{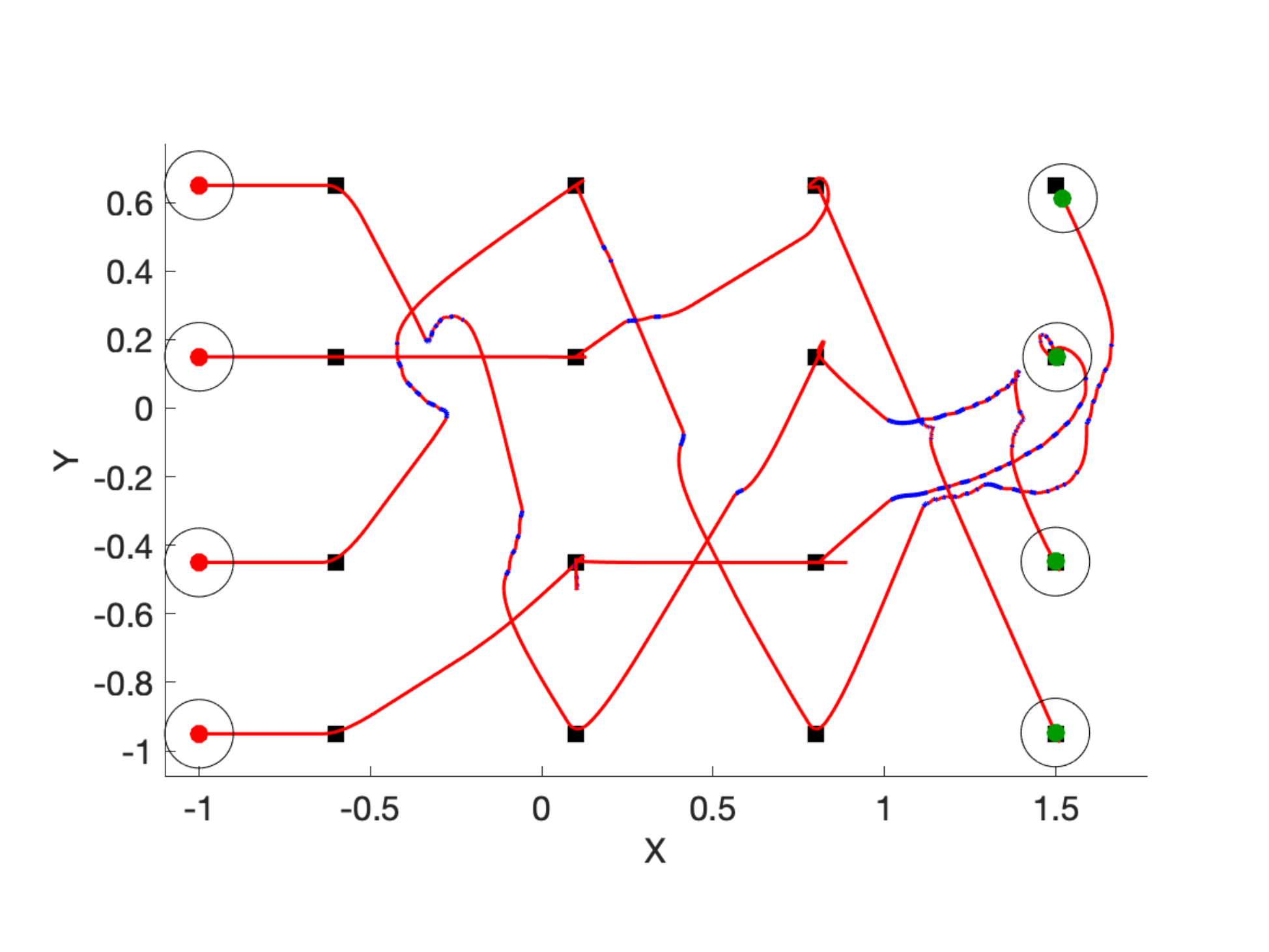}}\hfill
\subfloat[Distance to the closest neighbor for all agents.]{\includegraphics[width=.49\textwidth, trim=0cm -1.5cm 0cm 0cm]{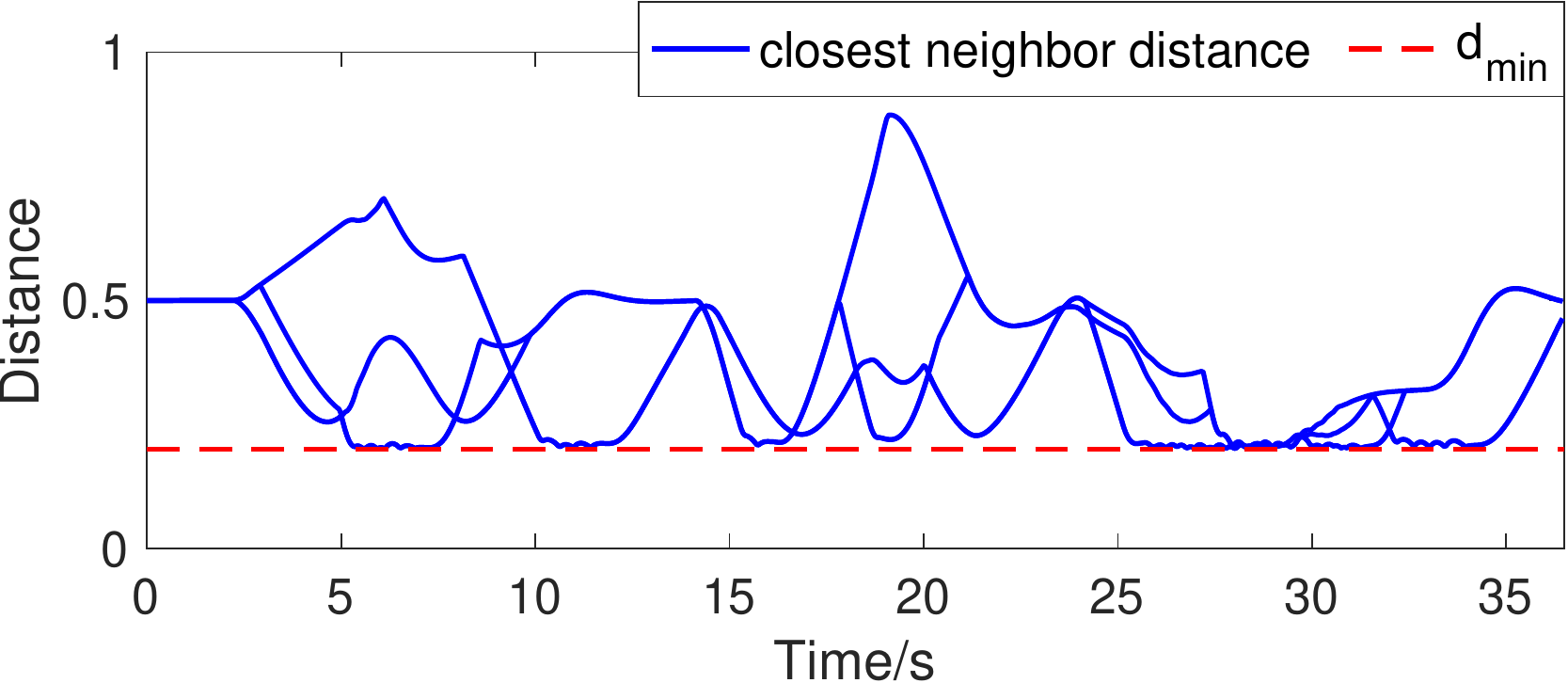}}\hfill
\caption{Experimental results for the way-point control study.}\label{fig:wp}
\vspace*{-4ex}
\end{figure}

This section describes the problem setup and experimental results for the way-point (WP) control case study.
For this study, the model of the agents is the same as the one used for the flocking case study, given in Eq.~(\ref{eq:flocking_dynamics_continuous}).
The experimental setup is shown in Fig.~\ref{fig:wp}, where the agents initially positioned on the left-hand side are to sequentially navigate through a series of WPs while maintaining a safe distance from each other. The WPs are represented by the black squares. 

The CBF, BC and DM are same as those defined for the flocking problem; see Section~\ref{sec:flocking_case_study}. The AC is a rule-based controller where each agent accelerates towards its next WP (ignoring the other agents) until the final WP is reached. 
Agents are assigned one WP from each column such that they are on a collision course if they follow the AC's commands.

\vspace{-2.0ex}
\subsection{Experimental Results}
The number of agents used in the experiment is four as is the number of WPs an agent is required to visit. Initially, the agents are at rest with their positions represented by the red dots in Fig.~\ref{fig:wp}(a). The final configuration is shown in green. 
The duration of the simulation is 37 seconds.
The other parameters used in the experiments are $r = 1.0$, $\bar{a} = 0.8$, $\bar{v} = 0.2$, $d_{min} = 0.2$, and $\eta = 0.05$s.
The trajectories of the agents are given in Fig.~\ref{fig:wp}(a), where the segments in blue indicate when the BC is in control. Fig.~\ref{fig:wp}(b) plots the smallest inter-agent distances, indicating that the agents maintain a safe distance from one another. A video of the simulation is available online.\footnote{\url{https://streamable.com/e9rnqd}}

\section{Microgrid Case Study}
\label{sec:mg_case_study}
%This section describes the problem setup and experimental results for the microgrid case study. 

%\subsection{The Microgrid Problem}
With an increasing prevalence of distributed energy resources (DERs) such as wind and solar power, electrification using microgrids (MGs) has witnessed unprecedented growth. 
Unlike traditional power systems, MG DERs do not have rotating components such as turbines. The lack of rotating components can lead to low inertia, making MGs susceptible to oscillations resulting from transient disturbances~\cite{pogaku}.  Ensuring the safe operation of an MG is thus a challenging problem.
% Unlike traditional power systems, MG DERs do not have rotating components like turbines. Due to low inertia, MG is susceptible to oscillations resulting from transient disturbances making safe operation a challenging problem~\cite{pogaku}.  
In this case study, we demonstrate the effectiveness of DSA in maintaining MG voltage levels within safe limits.

The MG we consider is a network of $n$ droop-controlled inverters, indexed by ${\mathcal{M} = \{1,\ldots,n\}}$. The dynamics of each inverter is modeled as~\cite{pogaku,droopeqns1, droopeqns2, Kundu2019}:
\begin{subequations}
\label{eq:droop_mg}
    \begin{equation}
       \dot{\theta_i}=\omega_i
    \end{equation}
    \begin{equation}
        \dot{\omega_i}=\omega_i^0-\omega_i+\lambda^p_i(P_i^{set}-P_i)
    \end{equation}
    \begin{equation}
        \dot{v_i}=v_i^0-v_i+\lambda^q_i(Q_i^{set}-Q_i)
    \end{equation}
\end{subequations}
where $\theta_i, \omega_i$, and $v_i$ are respectively the phase angle, frequency, and voltage of inverter $i$, $i \in \mathcal{M}$. 
The state vector for the MG is denoted by
$\mathbf{s}$ $=$ $[\theta^T \ \omega^T \ v^T]^T \in \mathbb{R}^{3n}$,
% where $\theta(t) = [\theta_1(t) \cdot \cdot \cdot \theta_n(t)]$, $\omega(t) = [\omega_1(t) \cdot \cdot \cdot \omega_n(t)]$ and $v(t) = [v_1(t) \cdot \cdot \cdot v_n(t)]$ are the vectors denoting the voltage phase angle, frequency and voltage at each node of the MG, respectively at time $t$. 
where $\theta$, $\omega$, and $v$ are respectively vectors representing the voltage phase angle, frequency, and voltage at each node of the MG. A pair of inverters are considered neighbors if they are connected by a transmission line. Also, $\lambda^p_i$ and $\lambda^q_i$ are droop coefficients of ``active power vs frequency'' and ``reactive power vs voltage'' droop controllers, respectively. Finally, $\omega_i^0$ and $v_i^0$ are the nominal frequency and voltage values.

$P_i$ and $Q_i$ are the active and reactive powers injected by inverter $i$ into the system:  %Baseline controller in this case study sets these points using a controller designed with the help of control barrier functions. Advanced controller sets these setpoints by solving equations (\ref{eq:PQ_eqns}) \cite{Kundu2019} at a stable steady state.%
\begin{equation}
    \begin{aligned}
    \label{eq:PQ_eqns}
    P_i &= v_i \sum_{k \in \mathcal N_i}{v_k(G_{i,k} \cos{\theta_{i,k}}+B_{i,k} \sin{\theta_{i,k}})}\\
     Q_i &= v_i \sum_{k \in \mathcal N_i}{v_k(G_{i,k} \sin{\theta_{i,k}}-B_{i,k} \cos{\theta_{i,k}})}
     \end{aligned}
\end{equation}
where $\theta_{i,k} = \theta_i - \theta_k$, and $\mathcal N_i \subseteq \mathcal{M}$ is the set of neighbors. $G_{i,k}, B_{i,k}$ are respectively conductance and susceptance values of the transmission line connecting inverters $i$ and $k$. 
% We note that the safety property is a unary on the state of an inverter.

$P_i^{set}$ and $Q_i^{set}$ are the active power and reactive power setpoints. 
% The goal of the baseline controller is to stabilize the fluctuations of the voltage during transience and make sure that the magnitudes of transient voltage is within the safe limits. This is achieved by changing the power setpoints. 
The inverters have the ability to change their respective power setpoints according to the MG's operating conditions. This is modeled as:
\begin{equation}
    \label{eq:PQ_setpoints}
    P_i^{set} = P_i^0+u_i^p, \
    Q_i^{set} = Q_i^0+u_i^q 
\end{equation}
where $P_i^0$ and $Q_i^0$ are the setpoints for the nominal operating condition, and $u_i^p$ and $u_i^q$ are control inputs. 

% \vspace{-2.0ex}
\subsection{Synthesis of Control Barrier Function}
The safety property for the MG network is a set of unary constraints restricting the voltages at each node to remain within safe limits. 
The recoverable set $\mathcal{R}_{i} \subset \mathbb{R}^3$ for inverter $i$ is defined as the super-level set of a CBF $h_i:\mathbb{R}^3 \to \mathbb{R}$. We follow the SOS-optimization technique given in~\cite{Kundu2019} to synthesize the CBF.

Since the power flow equations (\ref{eq:droop_mg}) are {nonlinear}, we apply a third-order Taylor series expansion to approximate the dynamics in polynomial form.  We then follow the three-step process given in~\cite{Kundu2019} to obtain the CBF for each MG node. 
%With the CBF for the inverters of the MG obtained,
We then calculate the admissible control space according to~(\ref{eq:admissible_control_space}),
and the BC, FSC, and RSC follow from (\ref{eq:bc}), (\ref{eq:FSC_DSA}), and (\ref{eq:RSC_DSA}), respectively.

% The function $h_{i}(s_i)$ is based on a safety constraint over an agent $i$ which aims to keep its state $\mathbf{s}_{i}$ recoverable. The safety constraint ensures that for any agent, the voltage will always stay within a pre-defined region.
% Step 1 involves using SOS based interior algorithm to construct a Lyapunov function and estimating the region of attraction (ROA) in a iterative manner. In step 2 maximum level set of Lyapunov function is used to obtain barrier functions described in equations~(\ref{eq:barrier_fn_1}),~(\ref{eq:barrier_fn_2}), using SOS problem solvers. In step 3 control policies $u_i$ are computed that satisfy conditions (\ref{eq:barrier_fn_3}), (\ref{eq:barrier_fn_4}). These distributed control policies have following form
% \begin{equation}
%     u_i=u_{ii}(x_i)+\sum_{j \in \mathcal N_i}u_{ij}(x_j)
% \end{equation}
% where $u_{ii}$ is control feedback of $i^{th}$ inverter and $u_{ij}$ are control feedback of its neighbors. These distributed control policies are obtained by solving the following optimization problem for some constant $c \in [0,1)$:
% \begin{equation}
% \label{eq:control_opt}
% \begin{aligned}
% \min_{u_i} \quad & U_i\\
% \textrm{s.t.} \quad & \forall i : \forall x_i \in \partial \mathcal D_i[c], \,\, \forall x_j \in \mathcal D_j[c] \,\,\, \forall j \in \mathcal N_i\\
%     & \nabla_{x_i} B_i^T(f_i + g_i u_i + \sum_{j \in \mathcal N_i} h_{ij})  \leq 0\\
%   & \lvert\lvert u_i(x_i) \rvert\rvert_{\infty}  \leq U_i 
% \end{aligned}
% \end{equation}

\vspace{-2.0ex}
\subsection{Advanced Controller}
The AC sets the active/reactive power setpoints to their nominal values. 
Thus, the AC does not limit voltage and frequency magnitudes but is only concerned with stabilizing frequency and voltage magnitudes to their nominal values.

\vspace{-2.0ex}
\subsection{Experimental Results}
We consider a 6-bus %(buses 0-5)
MG~\cite{Kundu2019}. Disconnecting the MG from the main utility, we replace bus~0 with a droop-controlled inverter (Eq.~(\ref{eq:droop_mg})), with inverters also placed on buses 1, 4 and~5.
%The inverter dynamics are modeled in the form of Eq.~(\ref{eq:droop_mg}).
Bus~0 is the reference bus for the phase angle. Nominal values of voltage and frequency, as well as the active/reactive power set-points, were obtained by solving the steady-state power-flow equations given in Eq.~(\ref{eq:PQ_eqns}); these were then used to shift the equilibrium point to the origin.
Droop coefficients $\lambda_i^p$ and $\lambda_i^q$ were set to 2.43 rad/s/p.u. and 0.20 p.u./p.u., and $\tau_i$ was set to 0.5~s.
Loads are modeled as constant power loads, and a Kron-reduced network~\cite{kundur1994power} with only the inverter nodes was used for analysis. 
The unsafe set is defined in terms of the shifted (around the 0~p.u.) nodal voltage magnitudes as follows: $v_i < -0.4 \text{\,p.u. \,\,\text{or}\,\, } v_i > 0.2 \text{\,p.u.}$

\begin{figure}[t]
\centering
\includegraphics[width=.95\textwidth]{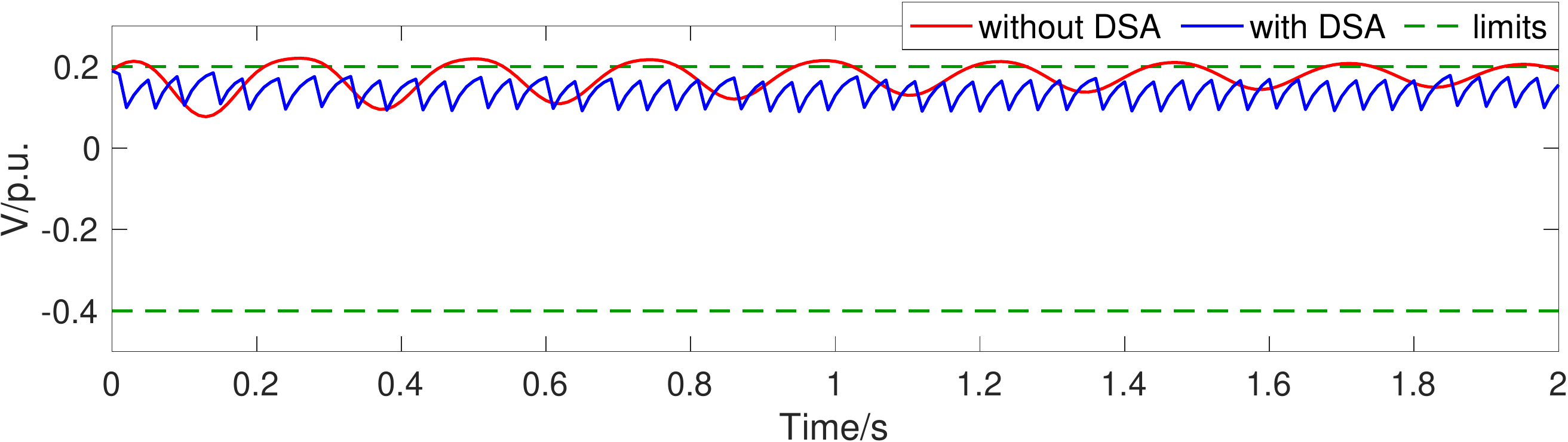}\hfill
\caption{Voltage graph at node~4 of the MG network.}\label{fig:mg_plots}
\vspace*{-4ex}
\label{fig:mg_res}
\end{figure}

The duration of the simulation is two seconds.  Our results show that with DSA, the voltages at all the nodes remain within the safe limits throughout the simulation. Fig.~\ref{fig:mg_res} gives the voltage plot at node~4. 
When the MG is operating under the control of DSA, forward and reverse switching cause sudden changes in the voltage.
As the voltage approaches the upper limit, a switch from AC to BC occurs. Subsequently, the BC reduces the voltage inducing a reverse switch.  
The voltage profile at the other nodes is similar. 

\section{Related Work}
\label{sec:related_work}
The original Simplex architecture~\cite{seto1998,seto1999} was developed for a systems comprising a single controller and a single (non-distributed) plant. With DSA, we extend the scope of Simplex to MASs under distributed control. 
%
%A number of approaches have been proposed for runtime safety assurance of MASs.
RTA~\cite{Aiello2010,Schierman2012} is a runtime assurance technique %similar in nature to the Simplex architecture. It
that can be applied to component-based systems. In this case, however, each RTA wrapper (i.e., each Simplex-like instance) independently ensures a local safety property of a component. For example, in~\cite{Aiello2010}, RTA instances for an inner-loop controller and a guidance system are uncoordinated and  operate independently. 
%This is in contrast to our approach which has the notion of neighborhood and an agent is aware of it local neighbors.
In contrast, in DSA, each agent takes the states of neighboring agents into account when making control decisions, in order to ensure that pairwise safety constraints are satisfied.

A runtime verification framework for dynamically adaptive multi-agent systems (DAMS-RV) is proposed in~\cite{Lim2016}.  DAMS-RV is activated every time the system adapts to a change in the system itself or its environment. It takes a feedback loop- and model-based approach to verifying dynamic agent collaboration. 
However this method relies on a \emph{monitoring} phase  to observe and identify changes that occur in agent collaboration so that verification can be carried out on the system operating in new contexts. DSA does not require such intermediary supervision.
In~\cite{Alotaibi2010}, a dynamic policy model that can be used to express constraints on agent behavior is presented. These constraints limit agent autonomy to lie within well-defined boundaries. Constraint specifications are kept simple by allowing the policy designer to decompose a specification into components and define the overall policy as a composition of these small units.
In contrast, DSA uses CBFs to compute the requisite safety regions.
% In contrast, DSA uses CBF to compute its safety region instead of temporal logic specifications.

CBF-based methodologies~\cite{magnus2015,magnus_heterogenous2016,Ames2017,Ames2019_reviewCBF} have been widely used for MAS runtime safety assurance. In~\cite{magnus2015,magnus_heterogenous2016}, a formal framework for collision avoidance in multi-robot systems is presented.  
CBFs are used to design a wrapper around an AC that guarantees forward invariance of a safe set.  The wrapper solves an optimization problem involving the Lie derivative of the CBF to compute minimal changes to the advanced controller's output needed to ensure safety.
% CBFs are used to guarantee forward invariance of a safe set. The safety barrier implemented via an optimization-based controller serves as a low-level safety controller formally ensuring the forward invariance of the safe operating set.

\section{Conclusion}
\label{sec:conclusion}
We have presented Distributed Simplex Architecture, a runtime assurance technique for the safety of multi-agent systems. DSA is distributed in the sense that it involves one local instance of traditional Simplex per agent such that the conjunction of their respective safety properties yields the desired safety property for the entire MAS. Moreover, an agent's switching logic depends only on its own state and that of neighboring agents.  We demonstrated the effectiveness of DSA by successfully applying it to flocking, way-point visiting, and microgrid control. As future work, we plan to apply DSA to  non-homogenous MASs 
%(e.g., by considering agents with different performance profiles in the flocking problem) 
and implement it on a physical platform.

\bibliographystyle{IEEEtran}
\bibliography{references}
\end{document}